\newtheorem{lemma}{Lemma}
\newtheorem{corollary}{Corollary}
\newtheorem{theorem}{Theorem}
\newenvironment{proof}[1][Proof]
{\textbf{#1} }{\ \rule{0.5em}{0.5em}}
\begin{document}
	
	\title{Noncommutative integration of the Dirac equation in homogeneous spaces}
	
	\author{A. I. Breev }
	\email{breev@mail.tsu.ru}
	\affiliation{Department of Theoretical Physics, Tomsk State University Novosobornaya
		Sq. 1, Tomsk, Russia, 634050}
	
	\author{A. V. Shapovalov }
	\email{shpv@phys.tsu.ru}
	\affiliation{Department of Theoretical Physics, Tomsk State University Novosobornaya
		Sq. 1, Tomsk, Russia, 634050}
	\affiliation{Tomsk Polytechnic University,  Lenin ave., 30, Tomsk, Russia, 634034}
	
	\begin{abstract}
		We develop a noncommutative integration method for the  Dirac equation in homogeneous spaces. The Dirac equation with an invariant metric is shown to be equivalent to a system of equations on a Lie group of transformations of a homogeneous space. This allows us to effectively apply the noncommutative integration method of linear partial differential equations on Lie groups. This method differs from the well-known method of separation of variables and to some extent can often supplement it.
		The general structure of the method developed is illustrated with an  example of a homogeneous space which does not admit separation of variables in the Dirac equation. However, the basis of exact solutions to the Dirac equation is constructed explicitly by the noncommutative integration method. Also, we construct a complete set of new exact solutions to the Dirac equation in the three-dimensional de Sitter space-time $\mathrm{AdS_{3}}$ using the method developed. The solutions obtained are  found in terms of elementary functions, which is characteristic of the noncommutative integration method.
	\end{abstract}
	
	\pacs{02.20.Qs, 03.65.Pm, 31.15.xh}
	
\keywords{Dirac equation; noncommutative integration; homogeneous spaces; induced representations; orbit method \\ Mathematics Subject Classification 2010: 35Q41, 17B08, 58J70}

\maketitle
		
\section{Introduction\label{S1}}

Exact solutions of the relativistic wave equations in strong gravitational and electromagnetic fields are the basis for studying quantum effects in the framework of quantum field theory in curved space-time (see, e.g. \cite{Birrel,Grib,Flh,Fvz,Wr,ksno}). A construction of the complete set of exact solutions to these equations in many cases is associated with the presence of integrals of motion. For example, to separate the variables in a wave equation, it is necessary to have $\dim{M}-1$ commuting integrals, where $M$ is the space of independent variables. In this paper, by integrability of the wave equation we mean an explicit possibility of reducing the original equation to a system of ordinary differential equations, the solution of which provides a complete set  of solutions to the original wave  equation.

The best-known technique for such a reduction is based on the method of separation of variables (SoV) (various aspects of the SoV method can be found, e.g., in \cite{Kalnins,Kalnins2,Miller}). There is a broad scope of research dealing with separation of variables in relativistic quantum wave equations, mainly for the Klein-Gordon and Dirac equations, and with classification of external fields admitting SoV in these equations (see, e.g., \cite{Bagrov} and references therein). This motivates the development of methods for the exact integration of wave equations other than SoV that can give some new possibilities in relativistic quantum theory.

In this regard, we focus on homogeneous spaces as geometric objects with high symmetry.
We also note that most of the physically interesting problems and effects are associated with gravitational fields possessing symmetries. Mathematically, these symmetries indicate the pre\-sen\-ce of various groups of transformations that leave invariant the gravitational field. Represen\-ting the space-time as a homogeneous space with a group-invariant metric, we can consider a large class of gravitational fields and cosmological models \cite{Stephani,Ryan} with rich symmetries, and the corresponding relativistic equations in these fields have integrals of motion.

We note that the relativistic wave equations on a homogeneous space may not allow separation of variables. The matter is that in accordance with the theorem of Refs. \cite{ShVN1,ShVN2}, for the separation of variables in the wave equation in an appropriate coordinate system the equation should admit a complete set of mutually commuting symmetry operators (integrals of motion, details can be found in \cite{ShVN1,ShVN2}, see also \cite{Ob2020}). Therefore, the problem arises of constructing exact solutions to the wave equation in the case when it has symmetry operators, but they do not form a complete set and separation of variables can not be carried out.
We consider the noncommutative integration method (NCIM) based on noncommutative algebras of symmetry operators admitted by the equation \cite{SpSh1,Sh2003,ShSO3,ShBr11,ShBr14}. This method can be thought as a generalization of the method of SoV. A reduction of the wave equation to a system of ODEs according to the NCIM (we use the term noncommutative reduction) can be carried out in a way  that is substantially different from the method of separation of variables.

We note that the method of noncommutative integration has shown its effectiveness in constructing bases of exact solutions to the Klein-Gordon and Dirac equations in some spaces with invariance groups.

For instance, the NCIM was applied to the Klein-Gordon equation in homogeneous spaces with an invariant metric in \cite{ShBr11,ShBr14}. The polarization vacuum effect of a scalar field in a homogeneous space was studied using NCIM in \cite{ShBr11,ShBr14,BrKz}.

The noncommutative reduction of the Dirac equation to a system of ordinary differential equations in the Riemannian and pseudo-Riemannian spaces with a nontrivial group of motions was considered in \cite{fed01,sh02,var03,var04,kls01,kls02}. In Refs. \cite{kls03,kls04} the NCIM was applied to the Dirac equation in the four-dimensional flat space and in the de Sitter space. The Dirac equation on Lie groups that can be a special case of homogeneous spaces with a trivial isotropy subgroup,  was explored in terms of the NCIM in Refs. \cite{brr01,brr02}.

It may also be worth noting  that the application of the NCIM to the Dirac equation can give a new class of its exact solutions, different from the solutions obtained by SoV. In cases where the Dirac equation does not admit separation of variables, the NCIM provides an uncontested option for constructing complete sets of solutions. The physical meaning of the solutions obtained by this method depends on the specifics of the problem being solved and requires special research in each case.

In the present work we consider non-commutative symmetries of the Dirac equation in homogeneous spaces. We also develop the method of noncommutative integration of the Dirac equation in homogeneous spaces. Using the group-theoretic approach, we reduce the Dirac equation on the homogeneous space to such a system of equations on the transformation group that lets us to apply the noncommutative reduction and construct exact solutions of  the Dirac equation.
In this paper, for the first time, we explicitly take into account the identities for generators of the transformation group in the problem of noncommutative reduction for the Dirac equation.

The work is organized as follows. In Section \ref{invariant-mertic} we briefly introduce basic concepts and notations from the theory of homogeneous spaces \cite{DiffG,Kobyashi,AdAr}, to be used later.

A construction of invariant differential operator with matrix coefficients on a homogeneous space is introduced in Section \ref{invariant-matrix-operator} following Refs.\cite{KyrnScalar,BarProlong}. Also in this section, we show the connection between generators of the representation of a Lie group on a homogeneous space and the other  representation induced by representation of  a subgroup, whose action on a homogeneous space has a stationary point.

In the next Section \ref{sec:lambda}, we introduce a special irreducible representation of the Lie algebra of the Lie group of transformations of a homogeneous space using the Kirillov orbit method \cite{Kirr}, that is necessary for noncommutative reduction.

In Section \ref{dirac-eq-in homogen-space} we present the Dirac equation in a homogeneous space with an invariant metric in terms of an invariant matrix operator of the first order. The spinor connection and symmetry operators of the Dirac equation are shown to define isotropy representation in a spinor space. Generators of the spinor representation are found explicitly.

We also introduce a system of differential equations on the Lie group of transformations of a homogeneous space, which is equivalent to the original Dirac equation in a homogeneous space.

Then, in Section \ref{sec:non},  we present  a noncommutative reduction of the Dirac equation on a homogeneous space, using the irreducible $\lambda$-representation introduced in section  \ref{sec:lambda} and  functional relations between  symmetry operators (identities) for the Dirac equation.

In Section \ref{sec:nosh}, we consider a homogeneous space with an invariant metric that does not admit separation of variables for the Klein--Gordon and Dirac equations. In this case a complete set of exact solutions of the Dirac equation is constructed using the noncommutative reduction (Section \ref{sec:non}).

The next Section \ref{sec:ads3} is devoted to the Dirac equation in the $(2+1)$ anti-de Sitter $\mathrm{AdS_{3}}$ - dimensional space. In this homogeneous space there are identities between the generators of the representation of the group $SO(1,3)$ that is taken into account when the noncommutative reduction is applied. The Dirac equation admits separation of variables in the $\mathrm{AdS_{3}}$ space, but the separable solutions are expressed through the special functions and have a complex form. The NCIM, being applied to this problem, results in the other complete set of exact solutions to the Dirac equation in the $\mathrm{AdS_{3}}$ space and these solutions are presented in terms of elementary functions.

In Section \ref{sec:Conclusion-remarks} we give our conclusion remarks.

\section{Invariant metric on a homogeneous space}\label{invariant-mertic}

This section introduces some basic concepts and notations of the  homogeneous space theory with an invariant metric.

Let $G$ be a simply connected real Lie group with a Lie algebra $\mathfrak{g}$, $M$ be a homogeneous space with right action of the group $G$, $(x,g)\rightarrow R_{g}x=xg\in M$ for $x\in M$, $g\in G$. For any $x\in M$ there exists an isotropic subgroup $H_{x}\in G$. Denote by $H=H_{x_{0}}$ a closed stabilizer of a point $x_{0}\in M$, and let $\mathfrak{h}$ be a Lie algebra of $H$. The homogeneous space $M$ is diffeomorphic to a quotient manifold $G/H$ of right cosets $[Hg]$ of the Lie group $G$ by $H$.

A transformation group  $G$ can be regarded as a principal bundle $(G,\pi,M,H)$ with a structure group $H$, a base $M$, and a canonical projection $\pi:G\rightarrow M$, $\pi(e)=x_{0}$, where $e$ is the identity element of  $G$. An arbitrary point $g\in G$  can be represented uniquely as  $g=h\,s(x)$, where $x=x_{0}g=\pi(g)\in M$, $h\in H$, and $s:M\rightarrow G$ is a local and smooth section of $G$, $\pi\circ s=\mathrm{id}$.

Differential of the canonical projection $\pi_{*}:T_{g}G\rightarrow T_{\pi(g)}M$ is a surjective map that allows any tangent vector $\tau\in T_{x}M$ on a homogeneous space to be represented as $\pi_{*} \zeta$,  where $\zeta\in T_{g} G$ is a tangent vector on $G$.

In turn, a linear space of the Lie algebra $\mathfrak{g}\simeq T_{e}G$ is decomposed into a direct sum of subspaces $\mathfrak{g}=\mathfrak{h}\oplus\mathfrak{m}$, where $\mathfrak{m}=\pi_{*}(\mathfrak{g})\simeq T_{x_{0}}M$ is a complement to $\mathfrak{h}=\mathbf{ker\ }\pi_{*}(e)$, i.e. $X= X_{\mathfrak {m}} + X_{\mathfrak {h}}$ holds  for any $X\in \mathfrak {g}$, where $ X_{\mathfrak {m}} = \pi_{*} X $, $ X_{\mathfrak {h}} \in\mathfrak {h}$.

We introduce an invariant metric on the homogeneous space $M$. Let $\langle \cdot, \cdot \rangle_{\mathfrak {m}}$ be a non-degenerate $Ad(H)$ - invariant
scalar product on the subspace $\mathfrak {m}$,
\begin{gather}
\langle[X,Y]_{\mathfrak{m}},Z_{\mathfrak{m}}\rangle_{\mathfrak{m}}+\langle Y_{\mathfrak{m}},[X,Z]_{\mathfrak{m}}\rangle_{\mathfrak{m}}=0,\quad
 X\in\mathfrak{h}, \quad Y,Z\in\mathfrak{g}.\label{Mproduct}
\end{gather}

By action of a Lie group $G$ with right shifts on the homogeneous space $M$, we define the inner product throughout the space $M$ as
\begin{gather}
	\langle u,v\rangle_{x}=\langle(R_{g^{-1}})_{*}u,(R_{g^{-1}})_{*}v\rangle_{\mathfrak{m}}, \quad
 u,v\in T_{x}M,\quad x=\pi(g).\label{Minv}
 \end{gather}

The $Ad(H)$-invariance (\ref{Mproduct}) is necessary and sufficient for the inner product (\ref{Minv}) to be invariant with respect to the action of $G$ on $M$. The inner product defines an \textit{invariant} metric on the homogeneous space $M$ \cite{Kobyashi}.

On the principal bundle $(G,\pi,M,H)$, we introduce local coordinates $(x^{i},h^{\alpha})$, ($i=1$, $\dots$, $\mathrm{dim}\,M$, $\alpha=1,\dots,\mathrm{dim}\,H$) of the direct product $\text{U\ensuremath{\times H}}$, where $\{x^{i}\}$ are local coordinates in a domain $U\subset M$ of a trivialization covering  $x_{0}\in U$ with local coordinates $\{h^{\alpha}\}$ in the subgroup $H$, $e_{H}=\{0\}$. So the local coordinates of an element $g=h\,s(x)$ from some neighborhood of the identity $G_{e}$  can be represented as  $g^{i}=x^{i}$, $g^{\alpha}=h^{\alpha}$. We choose a section $s:M\rightarrow G$ so that equalities $s^{a}(x)=x^{a}$ and $s^{\alpha}(x)=0$ hold over the  domain  $U$.

The tangent vectors
\[
e_{a} = \left. s_{*} \partial_{x^{a}} \right |_{x = x_{0}}, \quad e_{\alpha} = \left. \partial_{h^{\alpha}} \right |_{h = e_{H}},
\]
form a basis $\{e_{A}\} = \{e_{a}\} \cup \{e_{\alpha} \}$, ($ A = 1, \dots, \mathrm{dim}\, \mathfrak{g}$) of the Lie algebra $\mathfrak {g} \simeq T_{x_{0}} M \oplus T_{e_{H}} H$, where $\{e_{\alpha}\} $ is a basis of the Lie algebra $\mathfrak {h \simeq} T_{e_{H}} H$, and $\{e_{a}\} $ is a basis of the linear space $\mathfrak {m \simeq} T_{x_{0}} M$.
In the conjugate space $\mathfrak {g}^{*} \simeq T_{e}^{*} G$, we introduce the dual basis $\{e^{A}\} $ by the condition $\langle e^{A}, e_{B}\rangle =\delta_{B}^{A}$ where $\delta_{B}^{A}$ is the Kronecker symbol. The right-invariant basis vector fields $\eta_{A}(g) =-(R_{g})_{*} e_{A} $ and the corresponding right-invariant 1-forms $\sigma^{A} (g) = - (R_{g})^{*} e^{A} $ in local coordinates $(x, h)$ have the form
\begin{gather*}
\eta_{a}(x,h) = \eta_{a}^{i}(x,h)\partial_{x^{i}}+\eta_{a}^{\alpha}(x,h)\partial_{h^{\alpha}},\quad
\eta_{\alpha}(x,h)=\eta_{\alpha}^{\beta}(h)\partial_{h^{\beta}},\\
\sigma^{a}(x,h)  =\sigma_{i}^{a}(x,h)dx^{i},\quad
\sigma^{\alpha}(x,h)=\sigma_{i}^{\alpha}(x,h)dx^{i}+\sigma_{\beta}^{\alpha}(h)dh^{\beta},\quad
\alpha,\beta=1,\dots,H.
\end{gather*}
Here $(R_{g})_{*}$, $(R_{g})^{*}$ are differentials of the right shifts $R_{g}(g')=gg'$ on the Lie group $G$.

The right-invariant vector fields $\eta_{A}$ satisfy the commutation relations $ [\eta_{A}, \eta_{B}] = C_{AB}^{C} \eta_{C}$, while the right-invariant 1-forms $\sigma^{A}$ satisfy the Maurer-Cartan relations, $ -2 \, d\sigma^{C} = C_{AB}^{C}\sigma^{A} \wedge \sigma^{B}$.
Here $ C_{AB}^{A} = [e_{A}, e_{B}]^{C}$ are the structure constants of $\mathfrak {g} $ with indices  $A, B, C = 1, \dots, \dim G $.

We consider an invariant metric on the homogeneous space $M$ using a coordinate system in a domain  $U$  of   trivialization. A symmetric non-degenerate square matrix $G_{ab}=\langle e_{a},e_{b}\rangle_{\mathfrak{m}}$
in a subspace $\mathfrak{m\subset g}$ satisfies the $Ad(H)$-condition
\begin{gather}\label{conds_AdH}
G_{ab}C_{c\alpha}^{a}+G_{ac}C_{b\alpha}^{a}=0, \\ \nonumber
a,b=1,\dots,\dim M,\quad \alpha=1,\dots,\dim H.
\end{gather}
The invariant metric tensor in local coordinates $\{x^{i}\}$
is written as \cite{magSigma2015}:
\begin{gather}\label{gij_loc}
g_{ij}(x)=G_{ab}\sigma_{i}^{a}(x,e_{H})\sigma_{j}^{b}(x,e_{H}),\quad i,j=1,\dots,\dim M. 
\end{gather}
The contravariant components of the metric tensor are
\[
g^{ij}(x)=G^{ab}\eta_{a}^{i}(x,e_{H})\eta_{b}^{j}(x,e_{H}),\quad G^{ab}=(G_{ab})^{-1}.
\]

In what follows we will need the Christoffel symbols of the Levi-Civita connection
with respect to a $G$-invariant metric $g_{M}$ given by \cite{ShBr11,Kobyashi}
\begin{align}\nonumber
\Gamma_{jk}^{i}(x)= & \Gamma_{bc}^{a}\sigma_{j}^{b}(x,e_{H})\sigma_{k}^{c}(x,e_{H})\eta_{a}^{i}(x,e_{H})-\\ \nonumber
 -& \sigma_{j}^{b}(x,e_{H})\eta_{b,k}^{i}(x,e_{H})- \\
 -&C_{b\alpha}^{a}\sigma_{j}^{b}(x,e_{H})\sigma_{k}^{\alpha}(x,e_{H})\eta_{a}^{i}(x,e_{H}).\label{gamma_P_IJK}
\end{align}
Here $i,j,k=1,\dots,\dim M$, and  $\Gamma_{bc}^{a}$ are determined by  $G^{ab}$ of the quadratic form $\mathbf{G}$ and the structure constants of the Lie algebra $\mathfrak{g}$,
\begin{equation}
\Gamma_{bc}^{a}=-\frac{1}{2}C_{bc}^{a}-\frac{1}{2}G^{ad}\left[G_{ec}C_{bd}^{e}+G_{eb}C_{cd}^{e}\right].\label{gamma_P}
\end{equation}
Thus, in a homogeneous space with invariant metric, the Levi-Civita connection is defined by algebraic properties of the homogeneous space.

\section{Induced representations and invariant first-order differential operator
with matrix coefficients}\label{invariant-matrix-operator}

Consider algebraic conditions for an invariant first-order linear differential operator with matrix coefficients on a homogeneous space $M$. We follow Ref. \cite{KyrnScalar} where a more general case of invariant linear matrix differential operator of the second-order was studied.

Denote by $C^{\infty}(M,V)$ and $C^{\infty}(G,V)$ the two spaces of functions that map a homogeneous space $M$ and a transformation group $G$, respectively, to a linear space $V$. The last one can be regarded as a representation space of the algebra $\mathfrak{gl}(V)$.

Functions on the homogeneous space $M$ can be considered as defined on a Lie group $G$, but invariant over the fibers $H$ of the bundle $G$ \cite{Kobyashi}. In our case, when the functions take values in a vector space $V$, the space $C^{\infty}(M,V)$ is isomorphic to a subspace of the function space
\[
\hat{\mathcal{F}}=\{\varphi\in C^{\infty}(G,V)\mid\varphi(hg)=U(h)\varphi(g),\quad h\in H\},
\]
where $U(h)$ is an exact representation of the isotropy group $H$ in $V$. For any function $\varphi\in\hat{\mathcal{F}}$, we have
\begin{equation}
\varphi(g)=\varphi(hs(x))=U(h)\varphi(s(x)),\quad g=(x,h).\label{condF}
\end{equation}
Then we can identify $\varphi(s(x))$ with a function $\varphi\in C^{\infty}(M$, $V)$. Equation (\ref{condF}) gives an explicit form of the isomorphism $\hat{\mathcal{F}}\simeq C^{\infty}(M,V)$. Differentiating relation (\ref{condF}) with respect to $h^{\alpha}$ and assuming $h=e_{H}$, we obtain
\begin{gather}\label{inf_hatF}
\left(\eta_{\alpha}+\Lambda_{\alpha}\right)\varphi(g)=0,\quad\Lambda_{\alpha}=\left.\frac{\partial U(h)}{\partial h^{\alpha}}\right|_{h=e_{H}},\quad
   \alpha=1,\dots,\mathrm{dim}H.
\end{gather}
Here, $\Lambda_{\alpha}$ are representation operators of the algebra $\mathfrak{h}$ on the space $V$. Equation (\ref{inf_hatF}) is a consequence of the condition (\ref{condF}) in the definition of $\hat{\mathcal{F}}$. The isotropy subgroup $H$ is assumed to be connected. Then the conditions (\ref{condF}) and (\ref{inf_hatF}) are equivalent.

From (\ref{inf_hatF}) we can see that a linear differential operator $R=R(g,\partial_{g})$ leaves invariant the function space $\hat{\mathcal{F}}$, if
\begin{align}
&(\eta_{\alpha}+\Lambda_{\alpha})R(g,\partial_{g})\varphi(g) 
= [\eta_{\alpha}+\Lambda_{\alpha},R(g,\partial_{g})]\varphi(g)=0,\quad\varphi\in\hat{\mathcal{F}}.\label{ravFF}
\end{align}
Thus, the space $L(\hat{\mathcal{F}})$ of linear differential operators $R(g,\partial_{g}):\hat{\mathcal{F}}\rightarrow\hat{\mathcal{F}}$ consists of linear differential operators on $C^{\infty}(G,V)$ provided that
\begin{equation}
\left.[\eta_{\alpha}+\Lambda_{\alpha},R(g,\partial_{g})]\right|_{\hat{\mathcal{F}}}=0.\label{condLF}
\end{equation}
Then given relation (\ref{condF}), the action of $R(g,\partial_{g})\in L(\hat{\mathcal{F}})$
on a function $\varphi(g)$ from the space $\hat{\mathcal{F}}$ is written
as
\begin{equation}
R(g,\partial_{g})\varphi(g)=U(h)\left(U^{-1}(h)RU(h)\right)\varphi(s(x)).\label{actR}
\end{equation}
Multiplying equation (\ref{ravFF}) by $U^{-1}(h)$ and given $\eta_{\alpha}U(h)=-\Lambda_{\alpha}U(h)$,
we obtain
\begin{gather*}
U^{-1}(h)[\eta_{\alpha}+\Lambda_{\alpha},R(g,\partial_{g})]U(h)\varphi(s(x))=
[\eta_{\alpha},U^{-1}(h)R(g,\partial_{g})U(h)]\varphi(s(x))=\\
\eta_{\alpha}\left(U^{-1}(h)R(g,\partial_{g})U(h)\varphi(s(x))\right)=0.
\end{gather*}
From here it follows that the operator $U^{-1}(h)R(g,\partial_{g})U(h)$ 
is independent of $h$ and (\ref{actR}) can be written as
\begin{align}
& R(g,\partial_{g})\varphi(g) = U(h)R_{M}(x,\partial_{x})\varphi(s(x)),\label{actRM}\\ \nonumber
& R_{M}(x,\partial_{x}) \equiv\left.\left(U^{-1}(h)R(g,\partial_{g})U(h)\right)\right|_{h=e_{H}}=\\ \nonumber
&\qquad\qquad\,\,\,=\left.R(g,\partial_{g})U(h)\right||_{h=e_{H}}.
\end{align}
That is, for any operator $R(g,\partial_{g})$ of $L(\hat{\mathcal{F}})$ there
exists an operator $R_{M}$ on the homogeneous space $M$ acting on
functions of the space $C^{\infty}(M,V)$. We say that the operator
$R_{M}(x,\partial_{x})$ is the \textit{projection} of the operator $R(g,\partial_{g})$:
$R_{M}(x,\partial_{x})=\widehat{\pi}_{*}R(g,\partial_{g})$. For example,
for a  first-order linear differential operator
\[
R_{1}(g,\partial_{g})=B^{a}(x,h)\partial_{x^{a}}+B^{\alpha}(x,h)\partial_{h^{\alpha}}+B(x,h)
\]
the projection acts as follows:
\begin{gather}
R_{M}^{(1)}(x,\partial_{x})=\widehat{\pi}_{*}R_{1}(g,\partial_{g})=
B^{a}(x,e_{H})\partial_{x^{a}}+B^{\alpha}(x,e_{H})\Lambda_{\alpha}+B(x,e_{H}).\label{piR1}
\end{gather}
On the other hand, any linear differential operator $R_{M}$ defined on $C^{\infty}(M,V)$ corresponds to an operator 
\[
R(g,\partial_{g})=U(h)R_{M}(x,\partial_{x})U^{-1}(h)\in L(\mathcal{\hat{\mathcal{F}}}).
\]
Thus, we have the isomorphism $L(\hat{\mathcal{F}})\simeq L(C^{\infty}(M,V))$ whose explicit form is given by (\ref{actRM}).

Let $\xi_{X}(g)=(L_{g})_{*}X$ be a left-invariant vector field on the Lie group $G$, where $(L_{g})_{*}:T_{g'}G\rightarrow T_{gg'}G$ is the left shift differential $L_{g}(g')=gg'$ on  $G$, $X\in \mathfrak{g}$.

Since the left-invariant vector fields commute with right-invariant ones, the condition of projectivity (\ref{condLF}) is fulfilled. Using (\ref{piR1}), we find the corresponding operator on the homogeneous space as
\begin{align}
 & \widetilde{X}(x)=\hat{\pi}_{*}\xi_{X}(g)=X(x)+\xi_{X}^{\alpha}(x,e_{H})\Lambda_{\alpha},\quad X\in\mathfrak{g},\label{xiX}\\
 & X(x)=\xi_{X}^{a}(x)\partial_{x^{a}},\quad\xi_{X}^{a}(x)=\left.\frac{d}{dt}(xe^{tX})^{a}\right|_{t=0},\label{defX}
\end{align}
where $X(x)$ are the generators of the action of the group $G$ on  $M$,  note that  $X(x)$  act in the space $C^{\infty}(M)$.  It is easy to verify the following commutation relations for operators (\ref{xiX}):
\begin{gather*}
[\widetilde{X},\widetilde{Y}]=\left.\left[U^{-1}(h)\xi_{X}U(h),U^{-1}\xi_{Y}U(h)\right]\right|_{h=e_{H}}=\\
=\left.\left[U^{-1}(h)[\xi_{X},\xi_{Y}]U(h)\right]\right|_{h=e_{H}}=
\left.\left[U^{-1}(h)\xi_{[X,Y]}U(h)\right]\right|_{h=e_{H}}=\widetilde{[X,Y]},
\end{gather*}
for all $X,Y\in\mathfrak{g}$. Consequently, the operators $\widetilde{X}$ corresponding to the left-invariant vector fields $\xi_{X}$ are \textit{generators} of a transformation group acting on $C^{\infty}(M,V)$ and are a continuation of the vector fields $ X (x) $ obeying the same commutation relations. We show how the projection of left-invariant vector fields is related to representations of the group on homogeneous space. The space $\mathcal{\hat{F}}$  invariant under right shifts on the Lie group $G$  and the operators $T_{g}$, acting by the rule $(T_{g}\varphi) (g') = \varphi (gg')$, define a representation of the group $G$ on $\mathcal{\hat{F}}$ induced by the representation $U(h)$ of the subgroup $H$. According to  relation (\ref{condF}), we have:
\begin{gather}
(T_{g}\varphi)(s(x)) =\varphi(gs(x))=
\varphi(h(x,g)s(xg))=U(h(x,g))\varphi(s(xg)),\label{TgMi-1}
\end{gather}
where $h(x,g) \in H$ is the factor of the homogeneous space \cite{Kirr}, which is determined from the system of equations
\[
s(x)g=h(x,g)s(xg),\quad h(x,e)=e_{H}.
\]
In view of the isomorphism $\hat{\mathcal{F}} \simeq C^{\infty}(M,V)$, we obtain from (\ref{TgMi-1}) a representation of the Lie group $G$ on the space of functions $\psi\in C^{\infty} (M,V)$,
\[
(T_{g}\psi)(x)=U(h(x,g))\psi(xg).
\]
This representation is called the induced representation of the group $G$ on the homogeneous space $M$. Note that 
\begin{gather*}
  \xi_{X}(g) h^{\alpha} (x,g) = \xi_{X}(g) (h(x,g) s(xg))^{\alpha} =\xi_{X} (g)(s (x) g)^{\alpha} = \xi_{X}^{\alpha} (s (x) g),
\end{gather*}
whence immediately follows the expression for the derivative of the factor at the identity element
\begin{equation}
\left.\frac{d}{dt}h^{\alpha}(x,e^{tX})\right|_{t=0}=\xi_{X}^{\alpha}(x,e_{H}).\label{he}
\end{equation}
It is easy to see that the operators $\widetilde {X}(x)$,  as described by (\ref{defX}) and (\ref{he}),
  are  differentials of the representation $T_{g}$ on the homogeneous space $M$:
  \[
\widetilde{X}(x)\psi(x)=\left.\frac{d}{dt}\left(T_{\exp(tX)}\psi\right)(x)\right|_{t=0}.
\]

Thus, the projection of left-invariant vector fields on the group gives  the infinitesimal operators of the representation of $T_{g}$ induced by the representation $U(h)$ of the subgroup $H$.

An operator $R_{M}(x,\partial_{x})\in L(C^{\infty}(M,V))$ is invariant under the action of the Lie group of transformations, if $R_{M}(x,\partial_{x})$ commutes with $\widetilde{X}$:
\begin{gather*}
	[R_{M}(x,\partial_{x}),\widetilde{X}]=
	[U^{-1}(h)RU(h),U^{-1}(h)\xi_{X}U(h)]|_{h=e_{H}}=\\
   =U^{-1}(h)[R(g,\partial_{g}),\xi_{X}]U(h)|_{h=e_{H}}=0.
\end{gather*}
It follows that the operator $R_{M}(x,\partial_{x})$ is invariant
with respect to the transformation group if and only if the corresponding
operator $R(g,\partial_{g})\in L(\hat{\mathcal{F}})$ commutes with
the left-invariant vector fields:
\begin{equation}
[R(g,\partial_{g}),\xi_{X}]=0,\quad X\in\mathfrak{g}.\label{commXi}
\end{equation}
Let $R_{M}^{(1)}(x,\partial_{x})\in C^{\infty}(M,V)$ be a linear
differential operator of the first order, invariant with respect to
the group action. By (\ref{commXi}), this operator corresponds to
a first-order polynomial of right-invariant vector fields:
\[
R_{(1)}(g,\partial_{g})=B^{a}\eta_{a}(x,h)+B^{\alpha}\eta_{\alpha}(h)+\tilde{B}.
\]
As a result of the projection, the expression $B^{\alpha}\eta_{\alpha}(h)$
becomes constant $B^{\alpha}\Lambda_{\alpha}$, which can be eliminated
in the operator $R_{M}^{(1)}(x,\partial_{x})$ by changing the variable
$B=\tilde{B}+B^{\alpha}\Lambda_{\alpha}$. Therefore, we can put $B^{\alpha}=0$
without loss of generality. If we substitute the operator $R_{(1)}(g,\partial_{g})$
in the projectivity condition (\ref{condLF}), then we obtain
\begin{gather*}
\left.[\eta_{\alpha}+\Lambda_{\alpha},R_{(1)}(g,\partial_{g})]\right|_{\hat{\mathcal{F}}}=\left([b^{a},\Lambda_{\alpha}]\eta_{a}+b^{a}[\eta_{a},\eta_{\alpha}]+[B,\Lambda_{\alpha}]\right)_{\hat{\mathcal{F}}}=\\
\left([B^{a},\Lambda_{\alpha}]+B^{b}C_{b\alpha}^{a}\right)\left.\eta_{a}\right|_{\hat{\mathcal{F}}}+[B,\Lambda_{\alpha}]-B^{a}C_{a\alpha}^{\beta}\Lambda_{\beta}=0.
\end{gather*}
Also we have a system of algebraic equations for the coefficients
$B^{a}$ and $B$:
\begin{align}
 & [B^{a},\Lambda_{\alpha}]+B^{b}C_{b\alpha}^{a}=0,\label{sysBa}\\
 & [B,\Lambda_{\alpha}]-B^{a}C_{a\alpha}^{\beta}\Lambda_{\beta}=0.\label{sysB}
\end{align}
When equations (\ref{sysBa}) – (\ref{sysB}) are fulfilled, the projection of $R_{(1)}(g,\partial_{g})$ on the homogeneous space results in the desired form of the invariant linear differential operator of the first order:
\begin{gather}
R_{M}^{(1)}(x,\partial_{x})=\hat{\pi}_{*}R_{(1)}(g,\partial_{g})= B^{a}\eta_{a}^{i}(x,e_{H})\partial_{x^{i}}+B^{a}\eta_{a}^{\alpha}(x,e_{H})\Lambda_{\alpha}+B.\label{R1form}
\end{gather}
So, any linear differential operator of the first order acting on the functions of $C^{\infty}(M,V)$ and being invariant with respect to the action of the transformation group has the form (\ref{R1form}) where the matrix coefficients $B^{a}$ and $B$ satisfy the algebraic system of equations (\ref{sysBa}) –- (\ref{sysB}). The matrices $\Lambda_{\alpha}$ are generators of the isotropy subgroup $H$ in a linear space $V$.

\section{$\lambda$-representation of a Lie algebra}\label{sec:lambda}

 In this section we describe a special representation of the Lie algebra $\mathfrak {g}$ using the orbit method \cite{Kirr}. The direct and inverse Fourier transforms on the Lie group $G$ are introduced, that in what follows are necessary for the noncommutative reduction of the Dirac equation on the homogeneous space $M$. Here we also use some results of the previous section.

First, we describe an orbit  classification for the coadjoint representation of Lie groups following  conventions of Refs. \cite{ShDarbu,ShT}.

A degenerate Poisson--Lie bracket,
\begin{gather}
\left\{ \phi,\psi\right\} (f) = \langle f,\left[d\phi(f),d\psi(f)\right]\rangle=C_{AB}^{C}f_{C}\frac{\partial\phi(f)}{\partial f_{A}}\frac{\partial\psi(f)}{\partial f_{B}},\quad\phi,\psi\in C^{\infty}(\mathfrak{g}^{*}),\label{pl1}
\end{gather}
endows the space $\mathfrak{g}^{*}$ with a Poisson structure.
Here $f_{A}$ are coordinates of a linear functional $f=f_{A}e^{A} \in\mathfrak{g}^{*}$ relative to the dual basis $\left\{e^{A}\right\}$. The number $\mathrm{ind\,\mathfrak{g}}$ of functionally independent Casimir functions $K_{\mu}(f)$ relative to the bracket (\ref{pl1}) is called the \textit {index} of the Lie algebra $\mathfrak{g}$.

A \textit{coadjoint} representation on $\mathfrak{g}^{*}$,
$\mathrm{Ad}^{*}$: $G\times\mathfrak{g}^{*}\rightarrow \mathfrak{g}^{*}$, stratifies $\mathfrak{g}^{*}$ into orbits of the coadjoint representation (K-orbits). The restriction of the bracket (\ref{pl1}) on  orbits  is non-degenerate and coincides with the Poisson bracket generated by the symplectic Kirillov form $\omega_{\lambda}$.

The orbits of maximal dimension $\mathrm{dim}\, \mathcal{O}^{(0)}=\mathrm{dim}\, \mathfrak{g} - \mathrm{ind}\, \mathfrak{g}$ are called \textit{non-degenerate}, and the those of less dimension are singular. We denote by $\mathcal{O}_{\lambda}^{(s)}$ the orbits of dimension $\mathrm{dim}\,\mathfrak{g} - \mathrm{ind}\,\mathfrak{g}-2s$, $s = 0, \dots, (\mathrm{dim}\, \mathfrak{g} - \mathrm{ind}\, \mathfrak{g})/ 2$ passing through the functional $\lambda\in\mathfrak{g}^{*}$, and a number $s$ is called the \textit{orbit singularity index}. A tangent space $T_{f}\mathcal{O}_{\lambda^{(s)}}$ to the orbit $\mathcal{O_{\lambda}}^{(s)}$ at a point $f$ is the linear span of vector fields
\[
Y_{A}(f)=C_{AB}(f)\frac{\partial}{\partial f_{B}},\quad C_{AB}(f)=C_{AB}^{C}f_{C},
\]
so that the orbit dimension is given by the rank of the matrix $C_{AB}(f)$. The rank takes a constant value on the orbit, $\mathrm{dim\,\mathcal{O}}_{\lambda}^{(s)}=\mathrm{rank}\,C_{AB}(\lambda)$. The Kirillov form on tangent vectors to the orbit $\mathcal{O}_{\lambda}^{(s)}$ is  $\omega_{\lambda} (Y_{A}, Y_{B}) = \langle\lambda, [Y_{A}, Y_{B}] \rangle$.

The space $\mathfrak{g}^{*}$ can be decomposed into a sum of disjoint invariant algebraic surfaces $M_{s}$ consisting of orbits of the same dimension $\mathrm{dim}\,\mathfrak{g}-\mathrm{ind}\,\mathfrak{g}-2s$:
\begin{align*}
 & M_{0}=\left\{ f\in\mathfrak{g}^{*}\mid\lnot\left(F^{1}(f)=0\right)\right\} ,\\
 & M_{s}=\left\{ f\in\mathfrak{g}^{*}\mid F^{s}(f)=0,\lnot\left(F^{s+1}(f)=0\right)\right\} ,\quad s=1,\dots,\frac{\mathrm{dim}\,\mathfrak{g}-\mathrm{ind}\,\mathfrak{g}}{2}-1,\\
 & M_{(\mathrm{dim}\,\mathfrak{g}-\mathrm{ind}\,\mathfrak{g})/2}=\left\{ f\in\mathfrak{g}^{*}\mid F^{\frac{\mathrm{dim}\mathfrak{g}-\mathrm{ind}\mathfrak{g}}{2}}(f)=0\right\} ,
\end{align*}
where $F^{s}(f)$ denotes the set of all minors of the matrix $C_{AB}(f)=C_{AB}^{C}f_{C}$ of size $\mathrm{dim}\,\mathfrak {g } -\mathrm{ind}\,\mathfrak{g} -2s+2$; the notation $F^{s}(f) = 0$ implies that all the corresponding minors at the point $f$ vanish, and $\lnot(F^{s}(f) = 0)$ means that at the point $f$, the corresponding minors do not vanish simultaneously. In the general case, the surface $M_{s}$ is disconnected.

In what follows, by $M_{(s)}$ we  denote a connected component of the surface $M_{s}$ containing the orbit $\mathcal{O}^{(s)}$, and  $\mathcal{O}^{(s)}$ will be called the $(s)$-type orbit. Each component of $M_{(s)}$ is uniquely determined by a set of homogeneous polynomials $F_{\alpha}^{(s)}(f)$ satisfying the system
\[
\left.Y_{A}(f)F_{\alpha}^{(s)}(f)\right|_{F^{(s)}(f)=0}=0.
\]

The non-constant functions $K_{\mu}^{(s)}(f)$ on $M_{(s)}$  are called the  $(s)$-type  Casimir functions if they commute with any function on $M_{(s )}$ with respect to the bracket  (\ref{pl1}). In other words, the functions $ K_{\mu}^{(s)}(f)$ are invariants of the adjoint representation and are determined by the system
\[
\left.Y_{A}(f)K_{\mu}^{(s)}(f)\right|_{f\in M_{(s)}}=0,\quad\mu=1,\dots,r_{(s)}.
\]
The number of functionally independent solutions of this system is determined by the dimension of the surface $M_{(s)}$:
\[
r_{(s)}=\mathrm{dim}\,M_{(s)}-(\mathrm{dim}\,\mathfrak{g}-\mathrm{ind}\,\mathfrak{g}-2s).
\]
Denote by $\Omega^{(s)}\subset\mathbb{R}^{r_{(s)}}$ a set of values of the mapping $K^{(s)}: M_{(s)}\rightarrow\mathbb {R}^{r_{(s)}}$ and introduce a locally invariant subset
\begin{gather*}
\mathcal{O}_{\omega}^{(s)}=\bigg{\{} f\in M_{(s)}\mid K_{\mu}^{(s)}(f)=\omega_{\mu}^{(s)},\quad\mu=1,\dots,r_{(s)},\quad\omega^{(s)}\in\Omega^{(s)}\bigg{\}}.
\end{gather*}
If the Casimir functions $K_{\mu}^{(s)}(f)$ are single-valued, then the level surface $\mathcal{O}_{\omega}^{(s)}$ consists of a countable set of orbits. We call $\mathcal{O}_{\omega}^{(s)}$ the \textit{class of orbits}.
As a result, the space $\mathfrak{g}^{*}$ consists of a union of connected invariant disjoint algebraic surfaces $M_{(s)}$, which in turn is the union of the orbit classes $\mathcal{O}_{\omega}^{(s)}$:
\begin{equation}
\mathfrak{g}^{*}=\bigcup_{(s)}M_{(s)}=\bigcup_{(s)}\bigcup_{\omega^{(s)}\in\Omega^{(s)}}\mathcal{O}_{\omega}^{(s)}.\label{rg}
\end{equation}

Consider a quotient space $B_{(s)}=M_{(s)}/G$, $\mathrm{dim}B_{(s)}=r_{(s)}$, whose points are the orbits of one class, $\mathcal{O}_{\lambda}^{(s)}\in M_{(s)}$.
We introduce a local section $\lambda (j)$ of the bundle $M_{(s)}$ with base $B_{(s)}$ using  real parameters $ j = (j_{1}, \dots, j_{r_{( s)}}) $ taking their values in a domain $J\subset\mathbb {R}^{r_{(s)}}$:
\begin{gather*}
F^{(s)}(\lambda(j))=0,\quad K_{\mu}^{(s)}(\lambda(j))=\omega_{\mu}^{(s)}(j),\quad
\mathrm{det}\left\Vert \frac{\partial\omega_{\mu}^{(s)}(j)}{\partial j_{\nu}}\right\Vert \neq0.
\end{gather*}
Let $\mathcal{O}_{\lambda (j)}^{(s)}$ be a K-orbit of $(s)$-type passing through a covector $\lambda=\lambda (j)\in\mathfrak {g}^{*}$ and belonging to the same class of orbits for all $j\in J$.

Using the Kirillov orbit method \cite{Kirr}, we construct  a unitary irreducible representation of the Lie group $G$ on a given orbit. This representation can be constructed if and only if for the functional $\lambda$ there exists a subalgebra $\mathfrak{p\subset\mathfrak{\mathfrak{g}^{\mathbb{C}}}}$ in the complex extension $\mathfrak{g}^{\mathbb{C}}$ of the Lie algebra $\mathfrak{g}$ satisfying the conditions:
\begin{equation}
\langle\lambda,[\mathfrak{p},\mathfrak{p}]\rangle=0,\quad\mathrm{dim}\,\mathfrak{p}=\mathrm{dim}\,\mathfrak{g}-\frac{1}{2}\mathrm{dim}\,\mathcal{O}_{\lambda}^{(s)}.\label{defp}
\end{equation}

The subalgebra $\mathfrak{p}$ is called the polarization of the functional $\lambda$. In (\ref{defp}), it is assumed that the functionals from the space $\mathfrak{g}^{*}$ are extended to $\mathfrak{g}^{\mathbb{C}}$ by linearity. Moreover, real polarizations always exist for nilpotent and completely solvable Lie algebras, and the complex polarizations always exist for solvable Lie groups \cite{Dix}. For non-degenerate orbits $\mathcal{O}_{\lambda}^{(0)}$ there always exists, generally speaking, a complex polarization. In this paper, for  simplicity, we restrict ourselves to the case when $\mathfrak{p}$ is the real polarization.

Denote by $P$ a closed subgroup of the Lie group $G$ whose Lie algebra is $\mathfrak{p}$. The Lie group acts on the right homogeneous space $Q\simeq G/P $: $ q '= qg$. According to the orbit method, we introduce a unitary one-dimensional irreducible representation of the Lie group $P$, which, in the neighborhood of the identity element of the group,  has the form
\begin{equation}
U^{\lambda}(e^{X})=\exp\left(\frac{i}{\hbar}\langle\lambda,X\rangle\right),\quad X\in\mathfrak{p}.\label{Ul}
\end{equation}

The representation of the Lie group $G$ corresponding  to the orbit $\mathcal{O}_{\lambda}^{(s)}$ is induced using \eqref{Ul} as
\begin{align}
 & (T_{g}^{\lambda}\psi)(q)=\sqrt{\frac{\Delta_{G}(p(q,g))}{\Delta_{P}(p(q,g))}}U^{\lambda}(p(q,g))\psi(qg)= U^{\lambda+i\hbar\beta}(p(q,g))\psi(qg),\label{TgG}\\
 & \beta=\left.d\log\sqrt{\frac{\Delta_{G}(p)}{\Delta_{H}(p)}}\right|_{u=e_{P}},\nonumber
\end{align}
where $\Delta_{G}(g)=\mathrm{det}\mathrm{^{- 1} Ad}_{g}$ is the module of the Lie group $G$,   $\Delta_{P}(p) = \mathrm{det}\mathrm{Ad}_{p}$ is the module of the subgroup $P$, $p\in P$, and  $e_{P}$ is the identity element of
$P$. A function $p(q,g)$ is the factor of the homogeneous space $Q$.

The functions $\psi^{\lambda}(q;g)=(T_{g}^{\lambda}\psi)(q)$ on the  group $G$ satisfy a condition similar to (\ref{condF}):
\[
\psi^{\lambda}(q;pg)=U^{\lambda+i\hbar\beta}(p)\psi^{\lambda}(q;g),\quad u\in P.
\]
The space of all such functions will be denoted by $\mathcal{F^{\lambda}}$. 
Restriction of the left-invariant vector fields $\xi_{X}(g)$ to a homogeneous space $Q$, as follows from results of
 section \ref{invariant-matrix-operator}, is correctly defined, and the explicit form of the corresponding operator on the homogeneous space is given by  (\ref{actRM}):
 \begin{align}
 & \ell_{X}(q,\partial_{q},\lambda)=\left.\left(\left[U^{\lambda+i\hbar\beta}(p)\right]^{-1}\xi_{X}(g)U^{\lambda+i\hbar\beta}(p)\right)\right|_{p=e_{P}},\label{getEll-1}\\
 & [\ell_{X}(q,\partial_{q},\lambda),\ell_{Y}(q,\partial_{q},\lambda)]=\ell_{[X,Y]}(q,\partial_{q},\lambda),\quad X,Y\in\mathfrak{g}.\label{defL0}
\end{align}

Equation (\ref{getEll-1}) shows that $\ell_{X}(q,\partial_{q},\lambda)$ are  infinitesimal operators of the induced  representation (\ref{TgG}),
\[
\ell_{X}(q,\partial_{q},\lambda)\psi(q)=\left.\frac{d}{dt}(T_{\exp(tX)}^{\lambda}\psi)(q)\right|_{t=0}.
\]

Denote by $L(Q, \mathfrak {h}, \lambda)$ a space of functions on $Q$  where  representation (\ref{TgG}) is defined. The representation (\ref{TgG}) is unitary with respect to a scalar product of the function space $L_{2}(Q, \mathfrak {h}, \lambda)$:
\begin{gather}
(\psi_{1},\psi_{2}) = \int_{Q}\overline{\psi_{1}(q)}\psi_{2}(q)d\mu(q),\quad 
d\mu(q)=\rho(q)dq^{1}\dots dq^{\dim Q}.\label{scQ}
\end{gather}
The function $\rho(q)$ is determined from the Hermitian condition for the operators $-i\ell_{X}(q,\partial_{q},\lambda)$ with respect to this scalar product (\ref{scQ}).

The irreducible representation of the Lie algebra $\mathfrak {g}$ by the linear operators of the first order (\ref{getEll-1}) dependent on $\mathrm{dim}\, \mathcal{O}_{\lambda}^{(s)}/2$ variables is called $\lambda$ - \textit{representation} of the Lie algebra $\mathfrak{g}$ and it was  introduced in Ref. \cite{SpSh1}.

Let us  describe a recipe for calculating  operators of the $\lambda$-representation on the class of orbits $\mathcal{O}_{\omega}^{(s)}$. We find a local section $\lambda (j)$ of the bundle $M_{(s)}$ with the base $B_{(s)}$ for which there exists a polarization $\mathfrak {p}$ with a basis $\left\{e_{\overline {\alpha}} \right\} $.  Let us then fix the basis $\left\{e'_{\overline {a}}\right\}$ in the additional subspace $\mathfrak{m = \mathfrak{p}^{\perp}}$ to the subalgebra $\mathfrak{ p}$. In a neighborhood of the identity element  of the Lie group $G$, we introduce  local coordinates of the second kind,
\begin{gather*}
g(p,q)=\left(e^{p^{\dim\mathfrak{p}}e_{\overline{\dim\mathfrak{p}}}}e^{p^{\dim\mathfrak{p}-1}e_{\overline{\dim\mathfrak{p}-1}}}\dots e^{p^{1}e_{\overline{1}}}\right)
\left(e^{q^{\dim Q}e'_{\overline{\dim Q}}}e^{q^{\dim Q-1}e'_{\overline{\dim Q-1}}}\dots e^{q^{1}e'_{\overline{1}}}\right),	
\end{gather*}
and write the left-invariant vector fields in local coordinates $(p,q)$:
\[
\xi_{X}(p,q)=\xi_{X}^{\overline{a}}(q)\partial_{q^{\overline{a}}}+\xi_{X}^{\overline{\alpha}}(q,p)\partial_{p^{\overline{\alpha}}}.
\]
 Next, operators of the $\lambda$ representation are defined using
(\ref{getEll-1}) as
\begin{align*}
 & \ell_{X}(q,\partial_{q},\lambda)=\xi_{X}^{\overline{a}}(q)\partial_{q^{\overline{a}}}+\frac{i}{\hbar}\xi_{X}^{\overline{\alpha}}(q,e_{P})\left(\lambda_{\overline{\alpha}}+i\hbar\beta_{\overline{\alpha}}\right),\\
 & \beta_{\overline{\alpha}}=\frac{1}{2}\left[\mathrm{Tr}\left(\mathrm{ad}_{\overline{\alpha}}\right)-\mathrm{Tr}\left(\left.\mathrm{ad}_{\overline{\alpha}}\right|_{\mathfrak{p}}\right)\right].
\end{align*}
In other words, finding of these operators is reduced to calculating the left-invariant vector fields on the group $G_{\mathbb {}}$ in  the trivialization domain of the  principal bundle of this group in the fibrations $P=\exp(\mathfrak {p})$.

Let us write the representation operators (\ref{TgG}) in integral form
\begin{eqnarray}
  (T_{g}^{\lambda}\psi)(q) & = &\int_{Q}\psi(q'){D}_{qq'}^{\lambda}(g)d\mu(q),\label{funkD}\\ \nonumber
  {D}_{qq'}^{\lambda}(g)&= &\sqrt{\frac{\Delta_{G}(p(q,g))}{\Delta_{H}(p(q,g))}}U^{\lambda}(p(q,g))\delta(qg,q'),
\end{eqnarray}
where $\delta(q,q')$ is the generalized delta-function with respect to the measure $d\mu(q)$. The generalized kernels $ {D}_{qq'}^{\lambda}(g)$ of this representation satisfy the following properties:
\begin{gather}\nonumber
 {D}_{qq'}^{\lambda}(g_{1}g_{2})=\int_{Q} {D}_{qq''}^{\lambda}(g_{1}) {D}_{q''q'}^{\lambda}(g_{2})d\mu(q''),\\ \nonumber
 {D}_{qq'}^{\lambda}(g)=\overline{ {D}_{q'q}^{\lambda}(g^{-1})},\quad {D}_{qq'}^{\lambda}(e)=\delta(q,q'),
\end{gather}
and  the system of equations
\begin{gather}
\left(\eta_{X}(g)+\ell_{X}(q,\partial_{q},\lambda)\right) {D}_{qq'}^{\lambda}(g)=0,\nonumber \\ 
\left(\xi_{X}(g)+\overline{\ell_{X}(q',\partial_{q'},\lambda)}\right) {D}_{qq'}^{\lambda}(g)=0.\label{tD}
\end{gather}
Here $g_{1}, g_{2}\in G$.
Let $G^{\lambda} = \left\{g \in G \mid Ad_{g}^{*} \lambda = \lambda\right \}$ be a stabilizer of the functional $\lambda\in\mathfrak{g }^{*}$ with the Lie algebra $\mathfrak {g}^{\lambda}$, and the vectors
$\{e_{\alpha},  \alpha = 1, \dots, \dim\mathfrak {g}^{\lambda} \}$ form  some fixed basis for  $\mathfrak {g}^{\lambda} \subset \mathfrak {p}$. As $P$ is  the stabilizer of the point $q=0$ in the homogeneous space $Q$, and the group $G^{\lambda}$ lies in $P$, we get  the equality
\[
\ell_{\alpha}(0,\partial_{q},\lambda)=\frac{i}{\hbar}\left(\lambda_{\alpha}+i\hbar\beta_{\alpha}\right),\quad\alpha=1,\dots,\dim\mathfrak{g}^{\lambda}.
\]
Restricting the first equality (\ref{tD}) to the subgroup $G^{\lambda}$ and setting $q=0$, we find
\begin{equation}
\left(\eta_{\alpha}(h)+\frac{i}{\hbar}\left(\lambda_{\alpha}+i\hbar\beta_{\alpha}\right)\right) {D}_{0q'}^{\lambda}(h)=0,\quad h\in G^{\lambda}.\label{gl1}
\end{equation}

The solution of the system \eqref{gl1} up to a constant factor can be represented as
\begin{equation}
 {D}_{0q'}^{\lambda}(h)=\exp\left(-\frac{i}{\hbar}\int\sigma^{\lambda}(h)+\int\sigma^{\overline{\alpha}}(h)\beta_{\overline{\alpha}}\right).\label{solGl2}
\end{equation}

The subalgebra $\mathfrak {g}^{\lambda}$ is subordinate to the covector $\lambda$, and the 1-forms $\sigma^{\lambda} (h)$ and $\sigma^{\overline{\alpha}}(h) \beta_{\overline{\alpha}}$ are closed in  $G^{\lambda}$. Thus, the integral in (\ref{solGl2}) is well-defined. The local solution \eqref{solGl2} can be extended to a  global one,  if the integral on the right-hand side of \eqref{solGl2} over any closed curve $\Gamma$ on the subgroup $G^{\lambda}$ is a multiple of $2\pi i$. Note that since the 1-form $\sigma^{\lambda}(h)$ is closed, the value of this integral depends only on the homological class to which the curve  $\Gamma$ belongs. Therefore, for a global solution of the system \eqref{gl1}, the  following condition should be satisfied:
\begin{equation}
\frac{1}{2\pi\hbar}\oint\limits _{\Gamma\in H_{1}(G^{\lambda})}\sigma^{\lambda}(h)=n_{\Gamma}\in\mathbb{Z}.\label{conjO}
\end{equation}
In other words, the 1-form $\sigma^{\lambda}(h)$ should belong to an integral cohomology class from $H^{1}(G^{\lambda}, \mathbb{Z})$. In the case of a simply connected group $G$, the condition \eqref{conjO} is equivalent to the \textit{ condition of integral orbit} $\mathcal {O}_{\lambda}^{(s)}$, proposed by A.~A.~Kirillov \cite{Kirr}:
\begin{equation}
\frac{1}{2\pi\hbar}\int\limits _{\Gamma\in H^{2}(\mathcal{O}_{\ensuremath{\lambda}})}\ensuremath{\omega_{\lambda}}=\ensuremath{n_{\Gamma}\in}\ensuremath{\mathbb{Z}}.\nonumber
\end{equation}
Thus, for a simply connected group the coadjoint orbit   $\mathcal{O}_{\lambda}^{(s)}$ is integral if  the equality \eqref{conjO} is fulfilled.

A set of generalized functions $ {D}_{qq'}^{\lambda}(g)$ satisfying the system  (\ref{tD}) was studied in Refs. \cite{SpSh1, ShDarbu} and the hypothesis was proposed that
 this set of generalized functions has the properties of completeness and orthogonality for a certain choice of the measure $d\mu(\lambda)$ in the parameter space $J$:
\begin{gather}
\int_{G}\overline{ {D}_{\widetilde{q}\widetilde{q}'}^{\widetilde{\lambda}}(g)} {D}_{qq'}^{\lambda}(g)d\mu(g)=\Delta_{G}(s(q))\delta(q,\tilde{q})\delta(\tilde{q}',q')\delta(\tilde{\lambda},\lambda),\label{Dort}\\
\int_{Q\times Q\times J}\overline{ {D}_{qq'}^{\lambda}(\tilde{g})} {D}_{qq'}^{\lambda}(g)d\mu(q)\frac{d\mu(q')}{\Delta_{G}(s(q))}d\mu(\lambda)=\delta(\tilde{g},g).\label{Dful}
\end{gather}
Here $\delta(g)$ is the generalized Dirac delta function with respect to the left Haar measure $d\mu(g)$ on the Lie group $G$.

For compact Lie groups, the relations (\ref{Dort})-(\ref{Dful}) hold by the Peters-Weil theorem \cite{barut}. Note that although there is no rigorous proof of the relations (\ref{Dort})-(\ref{Dful}), in each case it is easy to verify directly their validity.

Consider the  space $L(G,\lambda, d\mu(g))$ of functions of the form
\begin{equation}
\psi^{\lambda}(g)=\int_{Q}\psi(q,q',\lambda) {D}_{qq'}^{\lambda}\left(g^{-1}\right)d\mu(q')d\mu(q).\label{psiLD}
\end{equation}
Here,  a function $\psi(q,q', \lambda)$ of the two variables  $q$ and $q'$ belongs to the space $L(Q, \mathfrak{h}, \lambda)$.
The inverse transform reads
\begin{gather}
\psi(q,q',\lambda)=\Delta_{G}^{-1}(s(q'))\int_{Q\times Q\times J}\psi_{\lambda}(g)\overline{ {D}_{qq'}^{\lambda}\left(g^{-1}\right)}d\mu_{R}(g), \label{invF}
\end{gather}
where  we have used  (\ref{Dort})--(\ref{Dful}),  and  $d\mu_{R}(g)=d\mu(g^{- 1})$ is the right Haar measure on the Lie group $ G $.

The action of the operators $\xi_{X}(g)$ and $\eta_{X}(g)$ on the function $\psi^{\lambda} (g)$ from $L_{2}(G, \lambda, d\mu (g))$, according to (\ref{psiLD}) and (\ref{invF}), corresponds to  action of the operators $\overline{\ell_{X}^{\dagger} (q, \partial_{q }, \lambda)}$ and $\ell_{X} (q', \partial_{q'}, \lambda)$ on the function $\psi (q, q', \lambda)$ respectively,
\begin{align*}
 & \xi_{X}(g)\psi^{\lambda}(g)\Longleftrightarrow\overline{\ell_{X}^{\dagger}(q,\partial_{q},\lambda)}\psi(q,q',\lambda)\text{,}\\
 & \eta_{X}(g)\psi^{\lambda}(g)\Longleftrightarrow\ell_{X}(q',\partial_{q'},\lambda)\psi(q,q',\lambda).
\end{align*}

The functions (\ref{psiLD}) are eigenfunctions  for the Casimir operators $K_{\mu}^{(s)} (i \hbar\xi) = K_{\mu}^{(s)} (- i \hbar\eta)$. Indeed, from the system (\ref{tD}) we can obtain
\begin{align*}
 & K_{\mu}^{(s)}(i\hbar\xi)\psi^{\lambda}(g)=K_{\mu}^{(s)}(-i\hbar\eta)\psi^{\lambda}(g)=\\
 & \int_{Q}\left(\overline{K_{\mu}^{(s)}(-i\hbar\ell(q,\partial_{q},\lambda))}\psi(q,q',\lambda)\right) {D}_{qq'}^{\lambda}\left(g^{-1}\right)d\mu(q')=\\
 & \int_{Q}\left(K_{\mu}^{(s)}(-i\hbar\ell(q',\partial_{q'},\lambda))\psi(q,q',\lambda)\right) {D}_{qq'}^{\lambda}\left(g^{-1}\right)d\mu(q').
\end{align*}
It follows that the operators $K_{\mu}^{(s)} (-i\hbar\ell(q',\lambda))$ are independent of  $q'$ and
\begin{align*}
 & K_{\mu}^{(s)}(i\hbar\xi)\psi^{\lambda}(g)\Longleftrightarrow\kappa_{\mu}^{(s)}(\lambda)\psi(q,q',\lambda),\\
 & K_{\mu}^{(s)}(-i\hbar\ell(q',\partial_{q'},\lambda))=\kappa_{\mu}^{(s)}(\lambda),\quad\overline{\kappa_{\mu}^{(s)}(\lambda)}=\kappa_{\mu}^{(s)}(\lambda),\\
 &\lim_{\hbar\rightarrow0}\kappa_{\mu}^{(s)}(\lambda)=\omega_{\mu}^{(s)}(\lambda).
\end{align*}
Thus, as a result of the generalized Fourier transform (\ref{psiLD}), the left and the right fields become  the operators of $\lambda$-representations, and the Casimir operators become constants. This fact is a key point for the method of noncommutative integration of linear partial differential equations on Lie groups, since it allows one to reduce  the original differential equation with $\dim G$  independent variables to an equation with fewer independent variables equal to $\dim Q$.

\section{Dirac equation in homogeneous space}\label{dirac-eq-in homogen-space}

In this section, we consider the Dirac equation in a $n$-dimensional
homogeneous space $M$ with an invariant metric. We shall assume that
in the homogeneous space $M$ an invariant metric $g_{M}$ and the
Levi-Civita connection are given. Denote by $V_{\Psi}$ a space
of spinor fields on $M$.

We write the Dirac equation in the space $M$ as an equation in a
$n$-dimensional Lorentz manifold $M$ with the metric
($\hbar$ is the Planck constant) as follows \cite{BagrQG}:
\begin{gather}
\left(i\hbar\gamma^{i}(x)[\nabla_{i}+\Gamma_{i}(x)]-m\right)\psi(x)=0,\label{diracM}\quad
i=1,\dots,n.
\end{gather}
Here $\nabla_{i}$ is the covariant derivative corresponding to the
Levi-Civita connection on $M$, $m$ is  mass of the field $\psi\in C^{\infty}(M,V_{\Psi})$,
$\psi(x)$ is a column with $2^{\left\lfloor n/2\right\rfloor }$
components, $\gamma^{i}(x)$ are $2^{\left\lfloor n/2\right\rfloor }\times2^{\left\lfloor n/2\right\rfloor }$
gamma matrices,
\begin{equation}
\{\gamma_{i}(x),\gamma_{j}(x)\}=2g_{ij}(x)E,
\quad i,j=1,\dots,n,\label{sys_gamma}
\end{equation}
where $E$ denotes the $2^{\left\lfloor n/2\right\rfloor }\times2^{\left\lfloor n/2\right\rfloor }$ identity matrix,
$\Gamma_{i}(x)$ is the spinor connection satisfying  the conditions
$[\nabla_{i},\gamma_{j}(x)]=0,\ \operatorname*{Tr}\Gamma_{i}(x)=0$.
The spinor connection $\Gamma_{i}(x)$  can be written as follows \cite{BagrQG}:
\begin{equation}
\Gamma_{i}(x)=-\frac{1}{4}(\nabla_{i}\gamma_{k}(x))\gamma^{k}(x).
\label{sys_gamma-a}
\end{equation}
We seek a solution of (\ref{sys_gamma}) with the decomposition
\begin{equation}
\gamma^{i}(x)=\hat{\gamma}^{a}\eta_{a}^{i}(x,e_{H}),\quad\hat{\gamma}^{a}=\gamma^{i}(x)\sigma_{i}^{a}(x,e_{H}).\label{my_sol_gamma}
\end{equation}
The constant matrices $\hat{\gamma}^{a}$ satisfy the algebraic equations
\begin{equation}
\{\hat{\gamma}^{a},\hat{\gamma}^{b}\}=2G^{ab}E,
\quad a,b=1,\dots,n.\label{sys_gamma2}
\end{equation}
For the Dirac matrices with subscripts using (\ref{gij_loc}) we have
\begin{equation}
\gamma_{i}(x)=g_{ij}(x)\gamma^{i}(x)=\hat{\gamma}_{a}\sigma_{i}^{a}(x,e_{H}),\quad\hat{\gamma}_{b}\equiv G_{ab}\hat{\gamma}^{a}.\label{gamma_down}
\end{equation}
The spinor connection is given by the following Lemma.
\begin{lemma}
\label{prop1}The spinor connection $\Gamma(x)=\gamma^{i}(x)\Gamma_{i}(x)$
on the homogeneous space $M$ with invariant metric $g_{M}$ reads
\begin{gather}
\Gamma(x)=\hat{\gamma}^{a}\left(\Gamma_{a}+\eta_{a}^{\alpha}(x,e_{H})\Lambda_{\alpha}\right),\quad
\Gamma_{a}=-\frac{1}{4}\Gamma_{ba}^{d}\hat{\gamma}^{b}\hat{\gamma}_{d},\nonumber\\ 
\Lambda_{\alpha}=-\frac{1}{8}G_{ac}C_{\alpha b}^{a}[\hat{\gamma}^{b},\hat{\gamma}^{c}].\label{sp_gamma_x}
\end{gather}
\end{lemma}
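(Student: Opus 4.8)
The plan is to start from the defining relation (\ref{sys_gamma-a}) for the spinor connection, contract it with $\gamma^{i}(x)$ to form $\Gamma(x)=\gamma^{i}\Gamma_{i}=-\tfrac14\gamma^{i}(\nabla_{i}\gamma_{k})\gamma^{k}$, and then substitute the frame decompositions (\ref{my_sol_gamma}) and (\ref{gamma_down}) so that every spacetime object is traded for the constant matrices $\hat{\gamma}^{a}$ and the vielbein data $\sigma_{i}^{a}(x,e_{H})$, $\eta_{a}^{i}(x,e_{H})$. Since the $\hat{\gamma}_{a}$ are constant, $\nabla_{i}\gamma_{k}=\hat{\gamma}_{b}(\partial_{i}\sigma_{k}^{b}-\Gamma_{ik}^{l}\sigma_{l}^{b})$, and contracting with $\gamma^{i}=\hat{\gamma}^{d}\eta_{d}^{i}$ and $\gamma^{k}=\hat{\gamma}^{c}\eta_{c}^{k}$ reduces the whole expression to $\Gamma(x)=-\tfrac14\hat{\gamma}^{d}\hat{\gamma}_{b}\hat{\gamma}^{c}\,\Omega_{d}{}^{b}{}_{c}$, where $\Omega_{d}{}^{b}{}_{c}=\eta_{d}^{i}\eta_{c}^{k}(\partial_{i}\sigma_{k}^{b}-\Gamma_{ik}^{l}\sigma_{l}^{b})$ is the spin-connection coefficient in the invariant frame.

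First I would evaluate $\Omega_{d}{}^{b}{}_{c}$ using the explicit homogeneous-space Christoffel symbols (\ref{gamma_P_IJK}). The vielbein duality $\sigma_{k}^{b}\eta_{c}^{k}=\delta_{c}^{b}$, differentiated, gives $\eta_{c}^{k}\partial_{i}\sigma_{k}^{b}=-(\partial_{i}\eta_{c}^{k})\sigma_{k}^{b}$, which removes all explicit derivatives of $\sigma$ in favour of derivatives of $\eta$; these in turn contract against the $\sigma_{j}^{b}\eta_{b,k}^{i}$ term of (\ref{gamma_P_IJK}). What survives splits into two groups: the purely algebraic reductive piece carrying $\Gamma_{bc}^{a}$ of (\ref{gamma_P}), and a piece carrying the mixed structure constants $C_{b\alpha}^{a}$ together with the fibre component $\eta_{a}^{\alpha}(x,e_{H})$. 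Here I would also use that the anholonomy of the frame, $\eta_{c}^{i}\partial_{i}\eta_{d}^{k}-\eta_{d}^{i}\partial_{i}\eta_{c}^{k}$, equals the $\mathfrak{m}$-projection $C_{cd}^{a}\eta_{a}^{k}$ of the commutator $[\eta_{c},\eta_{d}]$ corrected by the fibre-direction derivatives $\eta^{\alpha}\partial_{h^{\alpha}}\eta$, which are precisely the terms that feed the $\eta_{a}^{\alpha}\Lambda_{\alpha}$ contribution.

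Next I would collect the two groups after re-inserting the Clifford product $\hat{\gamma}^{d}\hat{\gamma}_{b}\hat{\gamma}^{c}$. The reductive group, upon relabelling and using $\hat{\gamma}_{b}=G_{be}\hat{\gamma}^{e}$, assembles into $\hat{\gamma}^{a}\Gamma_{a}$ with $\Gamma_{a}=-\tfrac14\Gamma_{ba}^{d}\hat{\gamma}^{b}\hat{\gamma}_{d}$; the anticommutator parts that arise in this reshuffling reduce, by (\ref{sys_gamma2}), to multiples of the identity and are removed by the tracelessness $\operatorname{Tr}\Gamma_{i}=0$ built into (\ref{sys_gamma-a}). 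The mixed group, which is proportional to $C_{\alpha b}^{a}$ and to the metric $G_{ac}$, collapses into a commutator $[\hat{\gamma}^{b},\hat{\gamma}^{c}]$, yielding $\hat{\gamma}^{a}\eta_{a}^{\alpha}\Lambda_{\alpha}$ with $\Lambda_{\alpha}=-\tfrac18 G_{ac}C_{\alpha b}^{a}[\hat{\gamma}^{b},\hat{\gamma}^{c}]$, which one then recognises as the realisation on $V_{\Psi}$ of the isotropy generators $\Lambda_{\alpha}$ of (\ref{inf_hatF}).

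The main obstacle is the bookkeeping in the last step: one must show that the non-antisymmetric parts of both the frame-derivative terms and the $\Gamma_{bc}^{a}$-terms genuinely cancel, so that only the commutators $[\hat{\gamma}^{b},\hat{\gamma}^{c}]$ remain. This is exactly where the $Ad(H)$-invariance condition (\ref{conds_AdH}), $G_{ab}C_{c\alpha}^{a}+G_{ac}C_{b\alpha}^{a}=0$, enters: it guarantees that $G_{ac}C_{\alpha b}^{a}$ is antisymmetric in the free Clifford indices, so the symmetric (anticommutator) part vanishes by (\ref{sys_gamma2}) and the coefficient of $\Lambda_{\alpha}$ reduces cleanly to the stated commutator. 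With the symmetric identity pieces eliminated by $\operatorname{Tr}\Gamma_{i}=0$, the two surviving groups reproduce exactly (\ref{sp_gamma_x}), completing the identification.
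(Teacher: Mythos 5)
Your route is the same as the paper's: expand $\Gamma(x)=\gamma^{i}(x)\Gamma_{i}(x)$ via (\ref{sys_gamma-a}), substitute the frame decompositions (\ref{my_sol_gamma}), (\ref{gamma_down}) and the homogeneous-space Christoffel symbols (\ref{gamma_P_IJK}), split the result into a reductive block carrying $\Gamma_{bc}^{a}$ and a mixed block carrying $C_{b\alpha}^{a}$, and use the $Ad(H)$-condition (\ref{conds_AdH}) to antisymmetrize the mixed block into the commutator $[\hat{\gamma}^{b},\hat{\gamma}^{c}]$, i.e.\ into $\hat{\gamma}^{a}\eta_{a}^{\alpha}(x,e_{H})\Lambda_{\alpha}$. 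That part of your argument, including the role of (\ref{conds_AdH}), coincides with the paper's proof and is correct.

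The genuine flaw is in your cleanup of the reductive block. Because you keep the ordering $\gamma^{i}(\nabla_{i}\gamma_{k})\gamma^{k}$, your Clifford string is $\hat{\gamma}^{d}\hat{\gamma}_{b}\hat{\gamma}^{c}$, and matching it to $\hat{\gamma}^{a}\Gamma_{a}=-\tfrac{1}{4}\Gamma_{ba}^{d}\hat{\gamma}^{a}\hat{\gamma}^{b}\hat{\gamma}_{d}$ forces one transposition $\hat{\gamma}_{d}\hat{\gamma}^{b}=2\delta_{d}^{b}E-\hat{\gamma}^{b}\hat{\gamma}_{d}$. The leftover term is $\tfrac{1}{2}\Gamma_{ba}^{b}\hat{\gamma}^{a}$: a product of three gamma matrices can only shed \emph{single-gamma} terms, never multiples of the identity, and since $\mathrm{Tr}\,\hat{\gamma}^{a}=0$ such terms are invisible to any trace argument. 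So your stated mechanism --- that the anticommutator parts ``reduce to multiples of the identity and are removed by $\operatorname{Tr}\Gamma_{i}=0$'' --- is wrong on both counts, and as written that step does not close. What actually kills the leftover is the identity $\Gamma_{ba}^{b}=0$, which follows directly from (\ref{gamma_P}): the contribution $G^{bd}G_{ea}C_{bd}^{e}$ vanishes as a contraction of a symmetric with an antisymmetric object, and the remaining pieces $-\tfrac{1}{2}C_{ba}^{b}-\tfrac{1}{2}C_{ad}^{d}$ cancel by antisymmetry of the structure constants. Replace the trace appeal by this one-line computation and your proof is complete. Note that the paper sidesteps the issue entirely: it first rewrites $\Gamma(x)=\tfrac{1}{4}\gamma^{i}\gamma^{k}\left(\partial_{x^{i}}\gamma_{k}-\Gamma_{ki}^{l}\gamma_{l}\right)$, its Eq.~(\ref{connection-1}), so that both raised gammas stand on the left; then the $\Gamma_{bc}^{a}$-part of (\ref{gamma_P_IJK}) reads off as $\hat{\gamma}^{a}\Gamma_{a}$ with no reordering at all, and the frame-derivative terms cancel identically against the $\eta_{b,k}^{i}$-term of the Christoffel symbols.
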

\begin{proof}
The function $\Gamma(x)=\gamma^{i}(x)\Gamma_{i}(x)$ with   $\Gamma_{i}(x)$ given by \eqref{sys_gamma-a},
can be written as
\begin{equation}
\Gamma(x)=\frac{1}{4}\gamma^{i}(x)\gamma^{k}(x)\left(\partial_{x^{i}}\gamma_{k}(x)-\Gamma_{ki}^{l}(x)\gamma_{l}(x)\right), \label{connection-1}
\end{equation}
where $\Gamma_{ki}^{l}(x)$ are the Christoffel symbols, and $\partial_{x^{i}}$ is the partial derivative.
Substituting (\ref{gamma_P_IJK}), (\ref{my_sol_gamma}), and (\ref{gamma_down}) in (\ref{connection-1}), we obtain
\[
\Gamma(x)=\Gamma+\frac{1}{4}C_{b\alpha}^{d}\hat{\gamma}^{a}\hat{\gamma}^{b}\hat{\gamma}_{d}\sigma_{i}^{\alpha}(x,e_{H})\eta_{a}^{i}(x,e_{H}).
\]
Using property (\ref{conds_AdH}) of the invariant metric, we reduce the
expression $C_{b\alpha}^{d}\hat{\gamma}^{b}\hat{\gamma}_{d}$ to the
form
\begin{gather*}
C_{b\alpha}^{d}\hat{\gamma}^{b}\hat{\gamma}_{d}=C_{b\alpha}^{d}G_{dc}\hat{\gamma}^{b}\hat{\gamma}^{c}=-C_{c\alpha}^{d}G_{db}\hat{\gamma}^{b}\hat{\gamma}^{c}=\frac{1}{2}C_{b\alpha}^{d}G_{dc}[\hat{\gamma}^{b},\hat{\gamma}^{c}]=4\Lambda_{\alpha}.
\end{gather*}
From the chain of equalities
\begin{gather*}
\sigma_{i}^{\alpha}(x,e_{H})\eta_{a}^{i}(x,e_{H})=\sigma_{A}^{\alpha}(x,e_{H})\eta_{a}^{A}(x,e_{H})-\sigma_{\beta}^{\alpha}(e_{H})\eta_{a}^{\beta}(x,e_{H})=\\
=\delta_{a}^{\alpha}-(-\delta_{\beta}^{\alpha})\eta_{a}^{\beta}(x,e_{H})=\eta_{a}^{\alpha}(x,e_{H}),
\end{gather*}
we obtain for the spinor connection the required expression (\ref{sp_gamma_x}).
\end{proof}

Thus, the Dirac equation in the homogeneous space $M$ with an invariant
metric $g_{M}$ and the Dirac matrices of the form (\ref{my_sol_gamma})
takes the form
\begin{gather*}
\mathcal{D}_{M}(x,\partial_{x})\psi=m\psi,\\
\mathcal{D}_{M}(x,\partial_{x})=i\hbar\hat{\gamma}^{a}\left[\eta_{a}^{i}(x,e_{H})\partial_{x^{i}}+\Gamma_{a}+\eta_{a}^{\alpha}(x,e_{H})\Lambda_{\alpha}\right].
\end{gather*}
A set of matrices $\Lambda_{\alpha}$ determines a spinor representation
of the isotopy subgroup $H$ in the space $V_{\Psi}$.
\begin{lemma}\label{prop2}
The matrices $\Lambda_{\alpha}$ are  generators
of the isotropy subgroup $H$ representation on the space $V_{\Psi}$.
\end{lemma}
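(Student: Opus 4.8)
The plan is to show that the matrices $\Lambda_{\alpha}=-\tfrac{1}{8}G_{ac}C_{\alpha b}^{a}[\hat{\gamma}^{b},\hat{\gamma}^{c}]$ obtained in Lemma~\ref{prop1} furnish a representation of the Lie algebra $\mathfrak{h}$, i.e.\ that they reproduce the commutation relations $[\Lambda_{\alpha},\Lambda_{\beta}]=C_{\alpha\beta}^{\gamma}\Lambda_{\gamma}$ of the isotropy subalgebra. Since $H$ is assumed connected, verifying this Lie-algebra homomorphism property is equivalent to establishing that the $\Lambda_{\alpha}$ integrate to a representation $U(h)$ of $H$ on $V_{\Psi}$, which is exactly the structure used throughout Section~\ref{invariant-matrix-operator} (cf.\ equations~\eqref{inf_hatF} and \eqref{sp_gamma_x}).

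First I would rewrite each $\Lambda_{\alpha}$ in terms of the standard spin generators. Using the Clifford relations \eqref{sys_gamma2}, the antisymmetrized products $\Sigma^{bc}=\tfrac{1}{4}[\hat{\gamma}^{b},\hat{\gamma}^{c}]$ satisfy the well-known $\mathfrak{so}$-type bracket
\begin{gather*}
[\Sigma^{bc},\Sigma^{de}]=G^{cd}\Sigma^{be}-G^{bd}\Sigma^{ce}-G^{ce}\Sigma^{bd}+G^{be}\Sigma^{cd},
\end{gather*}
so that $\Lambda_{\alpha}=-\tfrac{1}{2}G_{ac}C_{\alpha b}^{a}\Sigma^{bc}=\tfrac{1}{2}\Lambda_{\alpha,bc}\Sigma^{bc}$ with coefficients $\Lambda_{\alpha,bc}=-G_{ac}C_{\alpha b}^{a}$, which are antisymmetric in $b,c$ precisely because of the $Ad(H)$-invariance condition \eqref{conds_AdH}. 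The key observation is that $\Lambda_{\alpha,bc}$ is, up to the metric $G$, the matrix of the isotropy (linear-isotropy) action $\mathrm{ad}\,e_{\alpha}$ restricted to $\mathfrak{m}$: the structure constants $C_{\alpha b}^{a}$ are exactly the entries of $\mathrm{ad}_{e_{\alpha}}$ acting on the basis $\{e_{a}\}$ of $\mathfrak{m}$, and \eqref{conds_AdH} says this action is skew with respect to $G_{ab}$, i.e.\ it lands in $\mathfrak{so}(\mathfrak{m},G)$.

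The main computation then reduces to checking that $\alpha\mapsto\Lambda_{\alpha,bc}$ is a Lie-algebra homomorphism from $\mathfrak{h}$ into $\mathfrak{so}(\mathfrak{m},G)$, after which the spin-representation bracket above transports it to a homomorphism into the Clifford algebra. Concretely, I would substitute $\Lambda_{\alpha}=\tfrac{1}{2}\Lambda_{\alpha,bc}\Sigma^{bc}$ and $\Lambda_{\beta}=\tfrac{1}{2}\Lambda_{\beta,de}\Sigma^{de}$ into $[\Lambda_{\alpha},\Lambda_{\beta}]$, expand using the $\Sigma$-bracket above, and collect terms; the coefficients combine into $\Lambda_{\alpha,b}{}^{d}\Lambda_{\beta,dc}-\Lambda_{\beta,b}{}^{d}\Lambda_{\alpha,dc}$, which is the matrix commutator of the two isotropy actions on $\mathfrak{m}$. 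By the Jacobi identity on $\mathfrak{g}$ restricted to the $\mathfrak{m}$-components, this matrix commutator equals $C_{\alpha\beta}^{\gamma}$ times the isotropy action of $e_{\gamma}$, i.e.\ $C_{\alpha\beta}^{\gamma}\Lambda_{\gamma,bc}$. Hence $[\Lambda_{\alpha},\Lambda_{\beta}]=C_{\alpha\beta}^{\gamma}\Lambda_{\gamma}$, which is the claim.

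The step I expect to be the main obstacle is the bookkeeping in showing that the matrix commutator of the isotropy actions closes on the structure constants $C_{\alpha\beta}^{\gamma}$ of $\mathfrak{h}$ rather than producing stray $\mathfrak{m}$-valued contributions. This requires using that $\mathfrak{h}$ is a subalgebra (so $[e_{\alpha},e_{\beta}]\in\mathfrak{h}$, giving $C_{\alpha\beta}^{a}=0$ for $\mathfrak{m}$-indices $a$) together with the Jacobi identity $C_{\alpha b}^{a}C_{\beta a}^{c}-C_{\beta b}^{a}C_{\alpha a}^{c}=C_{\alpha\beta}^{\gamma}C_{\gamma b}^{c}+C_{\alpha\beta}^{d}C_{db}^{c}$; the reductivity of the splitting $\mathfrak{g}=\mathfrak{h}\oplus\mathfrak{m}$ and condition \eqref{conds_AdH} are what guarantee the unwanted $\mathfrak{m}$-index terms drop, leaving only the $\mathfrak{h}$-contribution. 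Once this algebraic closure is verified, the conclusion that $\Lambda_{\alpha}$ generate the spinor (isotropy) representation of $H$ on $V_{\Psi}$ is immediate from the connectedness of $H$.
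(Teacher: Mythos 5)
Your proposal is correct, and it reaches the same conclusion by a recognizably different organization of the algebra. The paper proves the closure relation $[\Lambda_{\alpha},\Lambda_{\beta}]=C_{\alpha\beta}^{\gamma}\Lambda_{\gamma}$ head-on: it first establishes the vector transformation law $[\Lambda_{\alpha},\hat{\gamma}^{a}]=C_{b\alpha}^{a}\hat{\gamma}^{b}$ (and its lowered-index analogue) from the Clifford relations and the $Ad(H)$-condition \eqref{conds_AdH}, then expands $[\Lambda_{\alpha},\Lambda_{\beta}]$ by the Leibniz rule so that the coefficients become the commutator of the two adjoint actions, and finally closes that commutator with the Jacobi identity. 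You instead decompose $\Lambda_{\alpha}=\tfrac{1}{2}\Lambda_{\alpha,bc}\Sigma^{bc}$ on the spin generators $\Sigma^{bc}=\tfrac{1}{4}[\hat{\gamma}^{b},\hat{\gamma}^{c}]$, observe that \eqref{conds_AdH} says exactly that the linear isotropy action $e_{\alpha}\mapsto C_{\alpha b}^{a}$ lands in $\mathfrak{so}(\mathfrak{m},G)$, and then factor the whole claim as a composition of two homomorphisms: isotropy representation $\mathfrak{h}\to\mathfrak{so}(\mathfrak{m},G)$ followed by the standard spin map $A\mapsto\tfrac{1}{2}A_{bc}\Sigma^{bc}$. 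The two arguments rest on the identical final ingredients — the Jacobi identity together with $C_{\alpha\beta}^{a}=0$ (that $\mathfrak{h}$ is a subalgebra) so that no stray $\mathfrak{m}$-valued terms survive — but they buy different things. The paper's route produces the intermediate formula $[\Lambda_{\alpha},\hat{\gamma}^{a}]=C_{b\alpha}^{a}\hat{\gamma}^{b}$ (its Eq.~(\ref{comm1})) as a reusable byproduct, which is cited again in the proof of Theorem \ref{prop3} to verify condition \eqref{sysBa}; your route does not deliver that formula, so if you later prove the projection theorem you would still need to derive it. On the other hand, your factorization is more conceptual and makes explicit two points the paper handles silently: why \eqref{conds_AdH} is precisely the skewness needed for the coefficients $\Lambda_{\alpha,bc}$ to be antisymmetric (so the $\Sigma$-expansion is legitimate), and where reductivity and the subalgebra property enter to kill the unwanted cross terms. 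One small caution: your displayed Jacobi identity has an index/sign slip (the right-hand side should involve $C_{\alpha\beta}^{\gamma}C_{b\gamma}^{c}$ rather than $C_{\alpha\beta}^{\gamma}C_{\gamma b}^{c}$ with your left-hand-side ordering), but this is bookkeeping of the same kind that the paper's own equations (\ref{commLL})--(\ref{CC_CC}) suffer from, and it does not affect the validity of the argument's structure.
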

\begin{proof}
We prove that the matrices $\Lambda_{\alpha}$ satisfy the commutation
relations
\begin{equation}
[\Lambda_{\alpha},\Lambda_{\beta}]=C_{\alpha\beta}^{\gamma}\Lambda_{\gamma}.\label{genH4}
\end{equation}
The commutator of  $\Lambda_{\alpha}$ and $\Lambda_{\beta}$
can be written as
\begin{align}
[\Lambda_{\alpha},\Lambda_{\beta}] & =  -\frac{1}{4}C_{\beta b}^{d}[\Lambda_{\alpha},\hat{\gamma}^{b}\hat{\gamma}_{d}]=  -\frac{1}{4}C_{\beta b}^{d}\left([\Lambda_{\alpha},\hat{\gamma}^{b}]\hat{\gamma}_{d}+\hat{\gamma}^{b}[\Lambda_{\alpha},\hat{\gamma}_{d}]\right).\label{comm3}
\end{align}
Using (\ref{conds_AdH}), (\ref{sys_gamma2}) and (\ref{gamma_down}),
we find the commutator of $\Lambda_{\alpha}$ with $\hat{\gamma}^{a}$:
\begin{align}
[\Lambda_{\alpha},\hat{\gamma}^{a}] & = \frac{1}{4}C_{b\alpha}^{d}[\hat{\gamma}^{b}\hat{\gamma}_{d},\hat{\gamma}^{a}]=\frac{1}{2}C_{b\alpha}^{d}\left(\delta_{d}^{a}\hat{\gamma}^{b}-G^{ab}\hat{\gamma}_{d}\right)=\nonumber \\
& =\frac{1}{2}\left(G^{bd}C_{b\alpha}^{a}-G^{ab}C_{b\alpha}^{d}\right)\hat{\gamma}_{d}=
C_{b\alpha}^{a}\hat{\gamma}^{b}.\label{comm1}
\end{align}
Similarly, for the $\gamma$-matrices with lower indices we have
\begin{equation}
[\Lambda_{\alpha},\hat{\gamma}_{a}]=C_{\alpha a}^{b}\hat{\gamma}_{b}.\label{comm2}
\end{equation}
Substitution of (\ref{comm1})--(\ref{comm2}) in (\ref{comm3}) yields
\begin{equation}
[\Lambda_{\alpha},\Lambda_{\beta}]=\frac{1}{4}\left(C_{\beta e}^{d}C_{\beta b}^{e}-C_{\beta b}^{e}C_{\alpha e}^{d}\right)\hat{\gamma}^{b}\hat{\gamma}_{d}.\label{commLL}
\end{equation}
The expression inside  the parentheses can be written in the form
\begin{gather}
C_{\beta e}^{d}C_{\beta b}^{e}-C_{\beta b}^{e}C_{\alpha e}^{d}=\left[C_{\alpha b}^{A}C_{\beta A}^{d}+C_{b\beta}^{A}C_{\alpha A}^{d}+C_{\beta\alpha}^{A}C_{bA}^{d}\right]+C_{\alpha\beta}^{\gamma}C_{b\gamma}^{d}.\label{CC_CC}
\end{gather}
By the Jacobi identity for the structure constants, the expression
inside the square brackets is equal to zero. Substituting (\ref{CC_CC}) in (\ref{commLL}),
we obtain (\ref{genH4}).
\end{proof}

The Dirac operator $\mathcal{D}_{M}(x)$ is a differential operator of the first order with matrix coefficients acting on the functions from $C^{\infty}(M,V_{\Psi})$ (spinors). Let us associate this operator with the corresponding operator on the transformation group. The key theorem of our work is

\begin{theorem}\label{prop3}
The Dirac operator $\mathcal{D}_{M}(x,\partial_{x})$ on the homogeneous space $M$ with invariant metric $g_{M}$ is a projection of an operator
\begin{gather}
\mathcal{D}_{M}(x,\partial_{x})=\hat{\pi}_{*}\mathcal{D}_{G}(g,\partial_{g}),\nonumber \\
\mathcal{D}_{G}(g,\partial_{g})\equiv i\hbar\hat{\gamma}^{a}[\eta_{a}(g)+\Gamma_{a}]\in L(\hat{\mathcal{F}}_{\Psi}).\label{dirac02}
\end{gather}
\end{theorem}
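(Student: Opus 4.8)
The plan is to prove the two assertions contained in the statement separately: first, that the operator $\mathcal{D}_{G}(g,\partial_{g})=i\hbar\hat{\gamma}^{a}[\eta_{a}(g)+\Gamma_{a}]$ indeed belongs to $L(\hat{\mathcal{F}}_{\Psi})$, i.e.\ preserves the space of $H$-equivariant spinor functions on $G$; and second, that its projection $\hat{\pi}_{*}\mathcal{D}_{G}$ coincides with the Dirac operator $\mathcal{D}_{M}$ written just above the theorem. Granting the first assertion, the second is a direct application of the projection rule (\ref{piR1}).

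For the projection I would expand the right-invariant field in the trivialization coordinates $(x,h)$ as $\eta_{a}(g)=\eta_{a}^{i}(x,h)\partial_{x^{i}}+\eta_{a}^{\alpha}(x,h)\partial_{h^{\alpha}}$, so that $\mathcal{D}_{G}$ is a first-order operator with coefficients $B^{i}=i\hbar\hat{\gamma}^{a}\eta_{a}^{i}$, $B^{\alpha}=i\hbar\hat{\gamma}^{a}\eta_{a}^{\alpha}$ and constant zeroth-order part $i\hbar\hat{\gamma}^{a}\Gamma_{a}$. The rule (\ref{piR1}), which sets $h=e_{H}$ and replaces each $\partial_{h^{\alpha}}$ by the generator $\Lambda_{\alpha}$, then gives
\[
\hat{\pi}_{*}\mathcal{D}_{G}=i\hbar\hat{\gamma}^{a}\bigl[\eta_{a}^{i}(x,e_{H})\partial_{x^{i}}+\Gamma_{a}+\eta_{a}^{\alpha}(x,e_{H})\Lambda_{\alpha}\bigr],
\]
which is precisely the expression obtained for $\mathcal{D}_{M}$ before the theorem. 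This step is therefore only bookkeeping once membership in $L(\hat{\mathcal{F}}_{\Psi})$ is secured.

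The substantive step is showing $\mathcal{D}_{G}\in L(\hat{\mathcal{F}}_{\Psi})$. Since $\mathcal{D}_{G}$ is a first-order polynomial in the right-invariant fields $\eta_{a}$ with constant matrix coefficients $B^{a}=i\hbar\hat{\gamma}^{a}$, $B^{\alpha}=0$ and constant term $B=i\hbar\hat{\gamma}^{a}\Gamma_{a}$, this membership is equivalent, by the discussion leading to (\ref{R1form}), to the algebraic conditions (\ref{sysBa}) and (\ref{sysB}). Equivalently, I would compute the commutator $[\eta_{\alpha}+\Lambda_{\alpha},\mathcal{D}_{G}]$ directly and then impose $\eta_{\beta}\varphi=-\Lambda_{\beta}\varphi$ from (\ref{inf_hatF}) on $\hat{\mathcal{F}}_{\Psi}$. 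The first-order part of the commutator collects $i\hbar\hat{\gamma}^{a}C_{\alpha a}^{b}\eta_{b}+i\hbar C_{b\alpha}^{a}\hat{\gamma}^{b}\eta_{a}$, and these terms cancel by the antisymmetry of the structure constants together with the commutator (\ref{comm1}) of Lemma \ref{prop2}; this is exactly condition (\ref{sysBa}), so it holds automatically.

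What remains --- and this is the main obstacle --- is the zeroth-order spinor-connection part, condition (\ref{sysB}), which after using the antisymmetry of $C_{\alpha a}^{\beta}$ reads
\[
[\hat{\gamma}^{a}\Gamma_{a},\Lambda_{\alpha}]=\hat{\gamma}^{a}C_{a\alpha}^{\beta}\Lambda_{\beta}.
\]
To verify it I would substitute $\Gamma_{a}=-\frac{1}{4}\Gamma_{ba}^{d}\hat{\gamma}^{b}\hat{\gamma}_{d}$, expand the commutator of $\Lambda_{\alpha}$ with the resulting cubic $\hat{\gamma}$-monomial through the Leibniz rule and the elementary relations (\ref{comm1})--(\ref{comm2}), and then reorganize the resulting contractions of structure constants with Christoffel symbols. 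The genuinely nontrivial work is purely algebraic: one must feed in the explicit form (\ref{gamma_P}) of $\Gamma_{bc}^{a}$, the $Ad(H)$-invariance (\ref{conds_AdH}) of the metric, and the Jacobi identity (as in the proof of Lemma \ref{prop2}) to match both sides. The identity expresses the $\mathfrak{h}$-covariance of the connection term $\hat{\gamma}^{a}\Gamma_{a}$, and in the reductive case $C_{a\alpha}^{\beta}=0$ it collapses to the statement that $\hat{\gamma}^{a}\Gamma_{a}$ commutes with every generator $\Lambda_{\alpha}$. I expect this Clifford-algebra computation to be the only delicate part of the argument, with everything else assembling through (\ref{piR1}).
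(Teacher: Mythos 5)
Your set-up coincides with the paper's own proof: identify the coefficients $B^{a}=i\hbar\hat{\gamma}^{a}$, $B=i\hbar\hat{\gamma}^{a}\Gamma_{a}$ (the paper's (\ref{coefBB})), reduce membership of $\mathcal{D}_{G}$ in $L(\hat{\mathcal{F}}_{\Psi})$ to the algebraic conditions (\ref{sysBa})--(\ref{sysB}), observe that (\ref{sysBa}) is nothing but the commutator relation (\ref{comm1}) of Lemma \ref{prop2}, and reduce (\ref{sysB}) to the single identity $[\Gamma,\Lambda_{\alpha}]=C_{a\alpha}^{\beta}\hat{\gamma}^{a}\Lambda_{\beta}$, which is exactly the paper's (\ref{condLs2}). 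The projection step via (\ref{piR1}) is, as you say, bookkeeping, and your formula for $\hat{\pi}_{*}\mathcal{D}_{G}$ matches the expression for $\mathcal{D}_{M}$ displayed before the theorem.

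The gap is that you stop precisely where the mathematical content of the theorem lies. Verifying (\ref{condLs2}) is not a deferrable ``reorganization of contractions''; naming the ingredients ((\ref{gamma_P}), (\ref{conds_AdH}), Jacobi) and asserting they will ``match both sides'' states the conclusion rather than proves it. The paper's derivation runs through (\ref{commLG}), (\ref{commLGa}) and (\ref{commLG3}), and then hinges on one specific identity, (\ref{CG3}): $C_{\alpha e}^{c}\Gamma_{ba}^{e}=C_{\alpha a}^{e}\Gamma_{be}^{c}+C_{\alpha b}^{e}\Gamma_{ea}^{c}+C_{\alpha a}^{\beta}C_{\beta b}^{c}$, which is where the explicit form (\ref{gamma_P}) of $\Gamma_{bc}^{a}$, the $Ad(H)$-invariance (\ref{conds_AdH}) and the Jacobi identity actually combine. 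The decisive structural fact your sketch does not anticipate is the following: after substituting (\ref{CG3}) into (\ref{commLG3}), \emph{all} terms containing Christoffel symbols cancel, and the lone survivor is proportional to the double-structure-constant term $C_{\alpha a}^{\beta}C_{\beta b}^{c}\,\hat{\gamma}^{a}\hat{\gamma}^{b}\hat{\gamma}_{c}$, which reassembles into the right-hand side of (\ref{condLs2}) through the relation $C_{b\beta}^{c}\hat{\gamma}^{b}\hat{\gamma}_{c}=4\Lambda_{\beta}$ established in the proof of Lemma \ref{prop1}. Exhibiting this cancellation and the residual $\Lambda_{\beta}$ term is the proof; without it your argument is a correct plan, not a demonstration. (Your aside about the reductive case $C_{a\alpha}^{\beta}=0$ is accurate but does not help: the theorem makes no reductivity assumption, and in the examples of Sections \ref{sec:nosh} and \ref{sec:ads3} this term structure matters.)
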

\begin{proof}
Comparing the Dirac operator $\mathcal{D}_{M}(x,\partial_{x})$ with
invariant matrix differential operator of the first order (\ref{R1form})
on the homogeneous space $M$, we obtain
\begin{equation}
B^{a}=i\hbar\hat{\gamma}^{a},\quad B=i\hbar\hat{\gamma}^{a}\Gamma_{a}.\label{coefBB}
\end{equation}
Projection (\ref{dirac02}) is determined if the coefficients $B^{a}$ and $B$ of the form (\ref{coefBB}) satisfy equations (\ref{sysBa})-–(\ref{sysB}). From (\ref{comm1}) it follows that the commutator of $\Lambda_{\alpha}$ and $\hat{\gamma}^{a}$ satisfies the first condition in (\ref{sysBa}). In this case the condition (\ref{sysB}) is reduced to
\begin{equation}
[\Gamma,\Lambda_{\alpha}]=C_{a\alpha}^{\beta}\hat{\gamma}^{a}\Lambda_{\beta}.\label{condLs2}
\end{equation}
The commutator of $\Gamma$ and $\Lambda_{\alpha}$ can be presented
in terms of the commutator $[\Lambda_{\alpha},\Gamma_{a}]$ as
\begin{equation}
[\Lambda_{\alpha},\Gamma]=[\Lambda_{\alpha},\hat{\gamma}^{a}]\Gamma_{a}+\hat{\gamma}^{a}[\Lambda_{\alpha},\Gamma_{a}].\label{commLG}
\end{equation}
Using (\ref{sp_gamma_x}) and 
(\ref{comm1})--(\ref{comm2}),
we get
\begin{align}
[\Lambda_{\alpha},\Gamma_{a}] 
& =-\frac{1}{4}\Gamma_{ba}^{d}\left([\Lambda_{\alpha},\hat{\gamma}^{b}]\hat{\gamma}_{d}+\hat{\gamma}^{b}[\Lambda_{\alpha},\hat{\gamma}_{d}]\right)=\nonumber \\
& =-\frac{1}{4}\Gamma_{ba}^{d}\left(C_{\alpha d}^{c}\hat{\gamma}^{b}\hat{\gamma}_{c}-C_{\alpha c}^{b}\hat{\gamma}^{c}\hat{\gamma}_{d}\right).\label{commLGa}
\end{align}
Substituting (\ref{commLGa}) into (\ref{commLG}), we find
\begin{equation}
[\Lambda_{\alpha},\Gamma]=\frac{1}{4}\left(C_{\alpha a}^{e}\Gamma_{be}^{c}-C_{\alpha c}^{c}\Gamma_{ba}^{e}+C_{\alpha b}^{e}\Gamma_{ea}^{c}\right)\hat{\gamma}^{a}\hat{\gamma}^{b}\hat{\gamma}_{c}.\label{commLG3}
\end{equation}
From (\ref{gamma_P}) and the Jacobi identity for structure constants
of the Lie algebra $\mathfrak{g}$ it follows that
\begin{equation}
C_{\alpha e}^{c}\Gamma_{ba}^{e}=C_{\alpha a}^{e}\Gamma_{be}^{c}+C_{\alpha b}^{e}\Gamma_{ea}^{c}+C_{\alpha a}^{\beta}C_{\beta b}^{c}.\label{CG3}
\end{equation}
Substituting (\ref{CG3}) into (\ref{commLG3}), we obtain (\ref{condLs2}).
Thus, relations (\ref{sysBa}) and (\ref{sysB}) are satisfied and
the lemma \ref{prop1} holds. It follows that the Dirac
operator $\mathcal{D}_{M}(x,\partial_{x})$ can be obtained by projection
of the operator $B^{a}\eta_{a}+B$ where $B^{a}$ and $B$ are given
by (\ref{coefBB}), and we come to (\ref{dirac02}).
\end{proof}
From this statement we immediately obtain
\begin{corollary}\label{coroll1} 
The generators
\begin{equation}
\widetilde{X}(x)=\xi_{X}^{a}(x)\partial_{x^{a}}+\xi_{X}^{\alpha}(x,e_{H})\Lambda_{\alpha}\label{killX0}
\end{equation}
of representation of the Lie algebra $\mathfrak{g}$ in the space $V_{\Psi}$
are symmetry operators of the Dirac operator $\mathcal{D}_{M}$ on
the homogeneous space $M$, i.e.
\begin{gather*}
[\widetilde{X}(x),\mathcal{D}_{M}(x,\partial_{x})]=0,\\
[\widetilde{X}(x),\widetilde{X}(x)]=\widetilde{[X,Y]}(x),\quad X,Y\in\mathfrak{g}.
\end{gather*}
\end{corollary}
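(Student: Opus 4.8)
The plan is to derive both assertions directly from Theorem~\ref{prop3} together with the invariance criterion \eqref{commXi} established in Section~\ref{invariant-matrix-operator}, so that no further computation with the Christoffel symbols or with the matrices $\Lambda_\alpha$ is needed. First I would recall that the generators \eqref{killX0} are exactly the projections $\widetilde{X}(x)=\hat{\pi}_*\xi_X(g)$ of the left-invariant vector fields, cf.\ \eqref{xiX}, now taken in the spinor space $V_\Psi$ with $\Lambda_\alpha$ the matrices of Lemma~\ref{prop2}. Lemma~\ref{prop2} guarantees that these $\Lambda_\alpha$ furnish a genuine representation of $\mathfrak{h}$, so the isomorphism $L(\hat{\mathcal{F}}_\Psi)\simeq L(C^\infty(M,V_\Psi))$ and all constructions of Section~\ref{invariant-matrix-operator} apply verbatim in the present setting.

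For the commutation relations $[\widetilde{X},\widetilde{Y}]=\widetilde{[X,Y]}$ I would simply invoke the chain of equalities already carried out in Section~\ref{invariant-matrix-operator}: since $\widetilde{X}=\bigl(U^{-1}(h)\xi_X U(h)\bigr)|_{h=e_H}$ and the left-invariant fields obey $[\xi_X,\xi_Y]=\xi_{[X,Y]}$, conjugation by $U(h)$ followed by restriction to $h=e_H$ preserves the bracket. This argument is independent of the particular choice of $U$, in particular it holds for the spinor representation generated by the present $\Lambda_\alpha$, so the second identity is immediate.

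For the symmetry property $[\widetilde{X},\mathcal{D}_M]=0$ the decisive input is Theorem~\ref{prop3}, which exhibits $\mathcal{D}_M=\hat{\pi}_*\mathcal{D}_G$ with $\mathcal{D}_G=i\hbar\hat{\gamma}^a[\eta_a(g)+\Gamma_a]\in L(\hat{\mathcal{F}}_\Psi)$ a first-order polynomial in the right-invariant fields $\eta_a$ with constant matrix coefficients $\hat{\gamma}^a,\Gamma_a$. By the criterion \eqref{commXi}, $\mathcal{D}_M$ commutes with every generator $\widetilde{X}$ if and only if $\mathcal{D}_G$ commutes with all left-invariant fields $\xi_X$. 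The latter holds for two independent reasons: the left-invariant and right-invariant vector fields on $G$ commute, so $[\xi_X,\eta_a]=0$, and $\xi_X$ acts on $C^\infty(G,V_\Psi)$ purely by differentiation, hence commutes with the constant matrices $\hat{\gamma}^a$ and $\Gamma_a$. Therefore $[\mathcal{D}_G,\xi_X]=0$ for all $X\in\mathfrak{g}$, and \eqref{commXi} yields $[\widetilde{X},\mathcal{D}_M]=0$.

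I do not expect a genuine obstacle, since the corollary is arranged to follow immediately once the preceding results are in place. The only point demanding a little care is to confirm that the reduction criterion \eqref{commXi} is indeed applicable, i.e.\ that $\mathcal{D}_G$ really lies in $L(\hat{\mathcal{F}}_\Psi)$ rather than merely in $L(C^\infty(G,V_\Psi))$. This is precisely what the verification of \eqref{sysBa}--\eqref{sysB} inside the proof of Theorem~\ref{prop3} secures, so in writing the proof I would state explicitly that the projectivity condition \eqref{condLF} for $\mathcal{D}_G$ is inherited from that theorem before applying \eqref{commXi}.
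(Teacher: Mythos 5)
Your proposal is correct and takes essentially the same route the paper intends: the paper presents the corollary as an immediate consequence of Theorem~\ref{prop3}, and your argument---combining the criterion \eqref{commXi} with the commutativity of left- and right-invariant vector fields, the fact that $\xi_X$ commutes with the constant matrices $\hat{\gamma}^a$, $\Gamma_a$, and the bracket computation $[\widetilde{X},\widetilde{Y}]=\widetilde{[X,Y]}$ already carried out in Section~\ref{invariant-matrix-operator}---is exactly the reasoning that makes it immediate. Your closing remark that membership $\mathcal{D}_G\in L(\hat{\mathcal{F}}_\Psi)$ is what the verification of \eqref{sysBa}--\eqref{sysB} inside Theorem~\ref{prop3} secures is a sound clarification rather than a departure from the paper's approach.
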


In view of the isomorphism $\hat{\mathcal{F}} \simeq C^{\infty} (M, V_{\Psi})$ and Theorem \ref{prop3}, the Dirac equation (\ref{diracM}) on $M$ is equivalent to the following system of equations on the transformation group $G$:
\begin{equation}
\begin{cases}
\mathcal{D}_{G}(g,\partial_{g})\psi(g)=m\psi(g),\\
\left(\eta_{\alpha}+\Lambda_{\alpha}\right)\psi(g)=0.
\end{cases}\label{sysG}
\end{equation}

The function $\varphi (x) = \psi(s(x))=\psi(x, e_{H})$ provides  a solution of the original Dirac equation (\ref{diracM}) on the homogeneous space $M$.

\section{Noncommutative integration\label{sec:non}}

We will look for a solution to the system (\ref{sysG}) as a set of functions
\begin{equation}
\psi_{\sigma}(g)=\int_{Q}\psi_{\sigma}(q') {D}_{qq'}^{\lambda}\left(g^{-1}\right)d\mu(q'),\quad\sigma=(q,\lambda),\label{sub001}
\end{equation}
where the function $\psi_{\sigma}(q')$ is a spinor, each component of which belongs to the function space $L(Q, \mathfrak {h}, \lambda)$ with respect to the variable $q'$, and $ {D}_{qq'}^{\lambda}(g^{-1})$ is introduced by \eqref{funkD}.

Using (\ref{tD}) we can then reduce the system (\ref{sysG}) to the equations
\begin{gather}
\begin{cases}
\mathcal{D}_{\ell}(q',\partial_{q'},\lambda)\psi_{\sigma}(q')=m\psi_{\sigma}(q'),\\
\left(\ell_{\alpha}(q',\lambda)+\Lambda_{\alpha}\right)\psi_{\sigma}(q')=0,
\end{cases}\label{Dl}
\end{gather}
where
\begin{equation}\label{Dll3}
	\mathcal{D}_{\ell}(q',\partial_{q'},\lambda)=
i\hbar\hat{\gamma}^{a}[\ell_{a}(q',\lambda)+\Gamma_{a}].
\end{equation}
We call the operator $\mathcal{D}_{\ell}(q', \partial_{q'}, \lambda)$ in \eqref{Dll3} the Dirac operator in the $\lambda$-representation. The number of independent variables $q'$ in \eqref{Dl} is $\dim \,Q$. The set of functions $\psi_{\sigma} (x) = \psi_{\sigma}(s (x))$ gives us  a solution of the original Dirac equation (\ref{diracM}) on the homogeneous space $M$.

It follows from the equations
\begin{align*}
& \xi_{X}(g)\psi_{\sigma}(g)  =U(h)\widetilde{X}(x)\psi_{\sigma}(x)=-\ell_{X}(q,\lambda)\psi_{\sigma}(g)=U(h)\left\{ -\ell_{X}(q,\lambda)\psi_{\sigma}(x)\right\},
\end{align*}
that the solutions $\psi_{\sigma}(x)$ of the Dirac equation satisfy the system
\begin{align}
\left\{ \widetilde{X}(x)+\ell_{X}(q,\lambda)\right\} \psi_{\sigma}(x) & =0,\label{XLp}
\end{align}
where $\widetilde{X}$ is given by \eqref{xiX}.
The algebraic relations between  operators of the $\lambda$-representation should  correspond to the algebraic relations between the generators $\widetilde{X}(x)$ for compatibility of   the system \eqref{XLp}.
More precisely,  the corollary of the system (\ref{XLp}) is to be fulfilled for any homogeneous function $F$ of $\widetilde{X}(x)$:
\[
F(\widetilde{X}(x))\psi_{\sigma}(x)=F(-\ell_{X}(q,\partial_{q},\lambda))\psi_{\sigma}(x).
\]
This condition is obviously satisfied for the commutator of two operators
($X,Y\in\mathfrak{g}$),
\[
[\widetilde{X}(x),\widetilde{Y}(x)]\psi_{\sigma}(x)=-[\ell_{X}(q,\partial_{q},\lambda),\ell_{Y}(q,\partial_{q},\lambda)]\psi_{\sigma}(x),
\]
and for the Casimir functions, we have
\begin{align*}
 & K_{\mu}^{(s)}(i\hbar\widetilde{X})\psi_{\sigma}(x)=\left(U^{-1}(h)K_{\mu}^{(s)}(i\hbar\xi(g))U(h)\right)\psi_{\sigma}(x)=\\
 & =U^{-1}(h)K_{\mu}^{(s)}(i\hbar\xi(g))\psi_{\sigma}(g)=\kappa_{\mu}^{(s)}(\lambda)\psi_{\sigma}(x).
\end{align*}

The  homogeneous functions $\Gamma\in C^{\infty}(\mathfrak{g}^{*})$ provided that $\Gamma(X (x))\equiv 0$,  can exist on the dual space $\mathfrak{g}^{*}$ to  the   space $M$.
These functions are called \textit{identities on the homogeneous space} $M$. The number of functionally independent identities \textit{$i_{M}$ is called the \textit{index} of the homogeneous space $M$. }
In Ref. \cite{ShT} it was shown that  any homogeneous function $\Gamma\in C^{\infty}(\mathfrak{g}^{*})$
 satisfying the condition
\begin{equation}
\left.\Gamma(f)\right|_{f\in\mathfrak{h^{\perp}}}=0,\label{defG}
\end{equation}
is an identity. In the same Ref. \cite{ShT} it was shown that
 the functions $F_{\alpha}^{(s_{M})}(f)$,
where $s_{M}=\left (\dim \, \mathfrak{g^{\lambda}} - \mathrm{ind} \, \mathfrak{g}\right)/2$,
are identities, and all other identities are Casimir functions
 $ {K}_{\mu}^{(s_{M})}(f)$ such that
 \[
\left. {K}_{\mu}^{(s_{M})}(f)\right|_{f\in\mathfrak{h^{\perp}}}\equiv0.
\]

A description of identities for the  generators  $\widetilde{X}(x)$ of the form \eqref{xiX} is given by the following Lemma which is essential result of our work:
\begin{lemma}\label{Flemma}
The identities for the operators  $\widetilde{X}(x)$ are
generated by the functions $F_{\alpha}^{(s_{M})}(f)$ and $ {K}_{\mu}^{(s_{M})}(f)$.
\end{lemma}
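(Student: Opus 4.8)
The plan is to reduce the matrix (spinor) case to the scalar case, which is already settled by the results of Ref.~\cite{ShT} quoted above. The two families of generators differ only by a zeroth-order matrix term: by \eqref{killX0} we may write $\widetilde{X}(x)=X(x)\,E+\xi_{X}^{\alpha}(x,e_{H})\Lambda_{\alpha}$, where $X(x)=\xi_{X}^{a}(x)\partial_{x^{a}}$ is the scalar generator and $E$ is the identity of $V_{\Psi}$. Through the isomorphism $\hat{\mathcal F}_{\Psi}\simeq C^{\infty}(M,V_{\Psi})$ and the projection $\widetilde{X}=\hat{\pi}_{*}\xi_{X}$ of Section~\ref{invariant-matrix-operator}, the statement that a homogeneous $\Gamma$ be an identity for $\widetilde{X}(x)$ is equivalent to the statement that the symmetrised operator $\Gamma(\xi(g))$ annihilate the subspace $\hat{\mathcal F}_{\Psi}$ cut out by $(\eta_{\alpha}+\Lambda_{\alpha})\varphi=0$.

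First I would establish the inclusion ``identity for $\widetilde{X}$ $\Rightarrow$ identity for $X$''. Since $\xi_{X}^{\alpha}(x,e_{H})\Lambda_{\alpha}$ is of order zero, the operators $\widetilde{X}(x)$ and $X(x)\,E$ carry the same principal symbol, so $\Gamma(\widetilde{X})$ and $\Gamma(X)\,E$ share the principal symbol $\Gamma(\mu(x,p))\,E$, where $\mu:T^{*}M\to\mathfrak g^{*}$ is defined by $\langle\mu(x,p),e_{A}\rangle=\xi_{e_{A}}^{a}(x)\,p_{a}$. As the generating fields of $\mathfrak h$ vanish at $x_{0}$, one has $\mu(x_{0},p)\in\mathfrak h^{\perp}$, and by equivariance $\operatorname{Im}\mu=\mathrm{Ad}^{*}(G)\,\mathfrak h^{\perp}$. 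Vanishing of $\Gamma(\widetilde{X})$ therefore forces $\Gamma|_{\operatorname{Im}\mu}=0$, in particular $\Gamma|_{\mathfrak h^{\perp}}=0$; by the cited characterisation of scalar identities, $\Gamma$ then belongs to the family generated by $F_{\alpha}^{(s_{M})}(f)$ and the $(s_{M})$-type Casimirs $K_{\mu}^{(s_{M})}(f)$ with $K_{\mu}^{(s_{M})}|_{\mathfrak h^{\perp}}\equiv0$.

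For the reverse inclusion I would verify that $F_{\alpha}^{(s_{M})}$ and $K_{\mu}^{(s_{M})}$ remain \emph{exact} identities for $\widetilde{X}$, and not merely at the symbol level. Here the $G$-equivariance of Corollary~\ref{coroll1} is the key tool: since $[\widetilde X,\widetilde Y]=\widetilde{[X,Y]}$, for a Casimir $K_{\mu}^{(s_{M})}$ the operator $K_{\mu}^{(s_{M})}(\widetilde X)$ commutes with every $\widetilde Y$, hence is a $G$-invariant intertwining operator on the homogeneous bundle $C^{\infty}(M,V_{\Psi})$, and by homogeneity its vanishing need only be checked at the base point $x_{0}$, where $\widetilde{e_{\alpha}}(x_{0})=\Lambda_{\alpha}$ by \eqref{killX0}. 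One then shows that the already-vanishing principal symbol forces each lower-order homogeneous component, which is again $G$-invariant, to vanish as well, descending in order. For the relatively invariant polynomials $F_{\alpha}^{(s_{M})}$ one proceeds analogously, using their defining property $\left.Y_{A}(f)F_{\alpha}^{(s_{M})}(f)\right|_{F^{(s_{M})}=0}=0$ on the surface $M_{(s_{M})}$.

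I expect the main obstacle to be precisely this last point: upgrading the vanishing of the principal symbol to the vanishing of the full matrix operator in the presence of the non-derivative terms $\xi_{X}^{\alpha}(x,e_{H})\Lambda_{\alpha}$. Controlling the resulting lower-order contributions requires the commutation relations $[\Lambda_{\alpha},\Lambda_{\beta}]=C_{\alpha\beta}^{\gamma}\Lambda_{\gamma}$ of Lemma~\ref{prop2} together with the $Ad(H)$-invariance \eqref{conds_AdH}, which should guarantee that the $\Lambda$-dependent terms reorganise into multiples of the surface relations $Y_{A}F_{\alpha}^{(s_{M})}$ and of the Casimir conditions $Y_{A}K_{\mu}^{(s_{M})}$, all of which vanish on $M_{(s_{M})}$. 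Once these terms are shown to cancel, the two inclusions combine to give that the identities for $\widetilde{X}(x)$ are generated exactly by $F_{\alpha}^{(s_{M})}(f)$ and $K_{\mu}^{(s_{M})}(f)$.
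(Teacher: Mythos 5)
Your first inclusion is correct and is essentially the paper's own proof: the paper likewise passes to the symbol $\widetilde{X}(x,p)=X^{a}(x)p_{a}+\xi_{X}^{\alpha}(x,e_{H})\Lambda_{\alpha}$, evaluates at the base point where $\widetilde{X}_{a}(x_{0},p^{0})=p_{a}^{0}$ and $\widetilde{X}_{\alpha}(x_{0},p^{0})=\Lambda_{\alpha}$, expands the vanishing matrix-valued function in a matrix basis $ {B}^{\sigma}$, and concludes that the scalar coefficients $\Gamma_{\sigma}$ satisfy \eqref{defG} and hence are scalar identities by Ref.~\cite{ShT}. The only real difference is cosmetic: you work with the principal symbol and the momentum map $\mu$, whereas the paper keeps the matrices $\Lambda_{\alpha}$ inside the substitution, which yields the slightly sharper structural statement $\Gamma'(\widetilde{X}(x))= {B}^{\sigma}\Gamma_{\sigma}(X(x))$ with each $\Gamma_{\sigma}$ a scalar identity.

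The second half of your proposal, however, is a genuine error: the ``reverse inclusion'' you set out to prove is not part of the lemma and is in fact false. ``Generated by'' here means only that every identity for the extended operators $\widetilde{X}$ is expressible through $F_{\alpha}^{(s_{M})}$ and $K_{\mu}^{(s_{M})}$; it does \emph{not} assert that these functions are themselves identities for $\widetilde{X}$. The paper's own example in Section \ref{sec:ads3} refutes the lift you attempt: in $\mathrm{AdS_{3}}$ the Casimir function $K_{2}(f)$ vanishes on $\mathfrak{h}^{\perp}$, so $K_{2}(X)\equiv 0$ is an identity for the scalar generators, and yet $K_{2}(\widetilde{X})=\frac{s}{2}\mathcal{D}_{M}\neq 0$ --- the $\Lambda$-dependent lower-order contributions do not cancel but assemble into the Dirac operator itself. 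Consequently, no amount of manipulation with the commutation relations of Lemma \ref{prop2} and the $Ad(H)$-invariance \eqref{conds_AdH} can force those terms to ``reorganise into multiples of vanishing quantities,'' and your invariance-plus-descent scheme proves only that $K_{\mu}^{(s_{M})}(\widetilde{X})$ is a $G$-invariant operator determined by its value at $x_{0}$ --- which in $\mathrm{AdS_{3}}$ it indeed is, namely $\frac{s}{2}\mathcal{D}_{M}$ rather than zero. This one-sidedness is not a removable technicality but the crux of the matter: it is precisely why the Dirac equation admits fewer identities than the Klein--Gordon equation (possibly none at all), and why the noncommutative reduction must be carried out with the weaker condition \eqref{tl1} instead of \eqref{tl2}.
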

\begin{proof}
Suppose that a homogeneous function $\Gamma'\in C^{\infty}(\mathfrak{g}^{*})$ is an  identity for the generators  $\widetilde X(x)$, i.e., $\Gamma' (\widetilde{X}) \equiv 0$. Then the symbol 
\[
  \Gamma'(x, p) = \Gamma' (\widetilde{X}(x, p)) \in C^{\infty}(\mathfrak{g}^{*}, V)
\]
of the operator $\Gamma'(\widetilde{X})$ also equals zero for all $(x, p)$, and 
\[
  \widetilde{X}(x, p) = X^{a}(x) p_{a} + \xi_{X}^{\alpha}(x, e_{H}) \Lambda_{\alpha},
\] 
where the constants $p_{a}$ are  coordinates of the covector $f^{x} = p_{a}dx^{a} \in T_{x}^{*} M$. At a given point $(x_{0}, p^{0})$, we have
\begin{gather*}
	\widetilde{X}_{a}(x_{0},p^{0})=p_{a}^{0},\quad\widetilde{X}_{\alpha}(x_{0},p^{0})=\Lambda_{\alpha},\\ f^{x_{0}}=p_{a}^{0}dx^{0}\in T_{x_{0}}^{*}M\simeq\mathfrak{h}^{\perp}.
\end{gather*}
Expanding   $\Gamma'(\widetilde{X}(x, p))$ in terms of the basis $ {B}$ of matrices in the vector space $V$ and putting $x = x_{0}$, $p = f^{x_{0}}$, we get:
\begin{gather*}
	\Gamma'(\widetilde{X}(x_{0},p^{0}))=\Gamma'(p_{1}^{0},\dots p_{\dim M}^{0},\Lambda_{1},\dots,\Lambda_{\dim H}) = {B}^{\sigma}\Gamma_{\sigma}(p_{1}^{0},\dots p_{\dim M}^{0})\equiv 0.
\end{gather*}
 As a result, for each function $\Gamma_{\sigma}(f) $ we  come to equation (\ref{defG}).
The last one shows that the functions $\Gamma_{\sigma}(f)$ are identities on a homogeneous space, and the identities $\Gamma'(f)$ for the operators
$\widetilde X(x)$ have the following structure:
\begin{align*}
	\Gamma'(\widetilde{X}(x)) & = \Gamma'(X(x)+\xi^{\alpha}(x,e_{H})\Lambda_{\alpha})= \\
	                          & = {B}^{\sigma}\Gamma_{\sigma}(X(x))\equiv 0.
\end{align*}
From this one can see that the number of functionally independent identities between 
$\widetilde{X}(x)$ does not exceed the index $i_{M}$ of the homogeneous space, and the functions $\Gamma_{\sigma}(f)$  depend on identities on the homogeneous space.
\end{proof}

For the compatibility of the system (\ref{XLp}), we have to take into account the identities between the generators $\widetilde{X}(x)$,   $\Gamma'(\widetilde{X}) \equiv 0$; namely we impose the following conditions on the operators of the $\lambda$-representation:
\begin{equation}
\Gamma'(-\ell(q,\partial_{q},\lambda))\equiv0.\label{tl1}
\end{equation}
A class of orbits and corresponding parameters  $j$ should be restricted by (\ref{tl1}).

For instance, for the case $\widetilde{X}(x) = X(x)$, the condition (\ref{tl1}) is reduced to
\begin{gather}
	F_{\alpha}^{(s_{M})}(-i\hbar\ell(q,\partial_{q},\lambda))\equiv 0,\nonumber \\
	{K}_{\mu}^{(s_{M})}(-i\hbar\ell(q,\partial_{q},\lambda))=\kappa_{\mu}^{(s_{M})}(\lambda(j))=0.\label{tl2}
\end{gather}
The first condition in (\ref{tl2}) says that the $\lambda$-representation has to be constructed by the class of orbits $\mathcal{O}_{\lambda(j)}^{(s_{M})}$, and the second one imposes a restriction on the parameters  $j$. In Ref.  \cite{BrGSh}, a $\lambda$-representation satisfying (\ref{tl2}) is called a $\lambda$-representation \textit{corresponding to the homogeneous space} $M$.

Thus, condition (\ref{tl2}) is stronger than (\ref{tl1}). One of the important results of our work is the fact that when performing a noncommutative reduction of the Dirac equation, it is necessary to use the weaker condition (\ref{tl1}) for the correct application of the noncommutative integration method.

The second equation of the system (\ref{Dl}) can be written as
\begin{align}
\bigg{(}\alpha_{\alpha}^{\overline{a}}(q')\frac{\partial}{\partial q'^{\overline{a}}}+\frac{i}{\hbar}\xi_{\alpha}^{\overline{\alpha}}(q',e_{P})&\left(\lambda_{\overline{\alpha}}+i\hbar\beta_{\overline{\alpha}}\right)+\Lambda_{\alpha}\bigg{)}\psi_{\sigma}(q')=0.\label{sys3}
\end{align}

We look for a  solution of \eqref{sys3}  in the form
\[
\psi_{\sigma}(q')=R_{\sigma}(q')\psi_{\sigma}(v),
\]
where $R_{\sigma}(q')$ is a certain function, and $\psi_{\sigma}(v)$ is an arbitrary function of the characteristics $v =v(q')$ of the system (\ref{Dl}).
We carry out a one-to-one change of variables $q'= q'(v, w)$, where $w = w (q')$ are some coordinates additional to $v$. By $V$ and $W$ we denote domains of the variables $v$ and $w$, respectively. The measure $d\mu (q')$ in the new variables takes the form $d\mu (q') = \rho (v, w) d\mu(v) d\mu( w)$. Then the solution of the original Dirac equation can be represented as
\begin{align*}
 & \psi_{\sigma}(x)=\int_{V}\psi_{\sigma}(v) {D}_{qv}^{\lambda}\left(x\right)d\mu(v),\\
 &  {D}_{qv}^{\lambda}\left(x\right)=\int_{W}R_{\sigma}(v,w) {D}_{qq'(v,w)}^{\lambda}\left(g^{-1}\right)\rho (v,w)d\mu(w).
\end{align*}
Substituting the solution $\psi_{\sigma} (x)$ into the Dirac equation (\ref{Dl}), we obtain a linear
first-order differential equation with matrix coefficients for the function $\psi_{\sigma}(v)$  with  the number of  independent variables $v$ equal to $\dim V$.

\section{The metric that does not admit separation of variables in the Dirac equation\label{sec:nosh}}

Consider a four-dimensional homogeneous space $M$ with a transformation group $G$ whose Lie algebra $\mathfrak{g}$ is defined in some basis $\{e_{A}\}$ by nonzero commutation relations
\begin{align*}
 & [e_{1},e_{4}]=-e_{1},\quad[e_{1},e_{5}]=e_{2},\quad[e_{2},e_{3}]=e_{1},\\
 & [e_{2},e_{4}]=e_{2},\quad[e_{3},e_{4}]=-2e_{3},\quad[e_{3},e_{5}]=e_{4},\\
 & [e_{4},e_{5}]=-2e_{5}.
\end{align*}
The Lie algebra $\mathfrak{g}$ is a semidirect product of the two-dimensional commutative ideal $\mathbb{R}^{2} = \mathrm{span} \{e_{1}, e_{2}\}$ and the three-dimensional simple algebra $\mathfrak{sl}(2) = \mathrm{span} \{e_{3}, e_{4}$, $e_{5}\}$. We also take $\mathfrak{h} = \{e_{5}\} $ as the one-dimensional subalgebra.

Denote by $(x^{a},h^{\alpha})$ local coordinates on a trivialization domain $U$ of the group $G$  so that
\begin{gather}
g(x,h)=e^{h^{1}e_{5}}e^{x^{4}e_{4}}e^{x^{3}e_{3}}e^{x^{2}e_{2}}e^{x^{1}e_{1}},\label{can3}\\
x=(x^{1},x^{2},x^{3},x^{4})\in M. \nonumber
\end{gather}

The group $G$ is unimodular and $\Delta_{G}(g) = 1$.
A symmetric non-degenerate matrix
\[
(G_{ab})=\begin{pmatrix}0 & 0 & 0 & -c_{3}\\
0 & c_{4} & c_{3} & c_{2}\\
0 & c_{3} & 0 & 0\\
-c_{3} & c_{2} & 0 & c_{1}
\end{pmatrix},\quad c_{k}=\mathrm{const},\quad c_{3}\neq0,
\]
defines an invariant metric on the  space $M$,
\begin{align}
ds^{2} & =\frac{e^{x^{4}}}{c_{3}^{2}}\bigg{\{} 2c_{2}e^{2x^{4}}\left(dx^{1}-x^{3}dx^{2}\right)dx^{3}-c_{1}e^{x^{4}}\left(dx^{1}-x^{3}dx^{2}\right){}^{2}-c_{4}e^{3(x^{4})^{2}}d(x^{3})^{2}+\nonumber\\
 &+2c_{3}\left(dx^{3}+x^{3}dx^{4}\right)dx^{2}-2c_{3}dx^{1}dx^{4}\bigg{\}} .\label{ds2primer}
\end{align}
The metric (\ref{ds2primer}) has nonzero scalar curvature $R=6c_{1}$. The group 
generators in the canonical coordinates (\ref{can3}) have the form:
\begin{align*}
X_{1}(x) & =\partial_{x^{1}},\quad X_{2}(x)=\partial_{x^{2}},\quad X_{3}(x)=x^{2}\partial_{x^{1}}+\partial_{x^{3}},\\
X_{4}(x) & =-x^{1}\partial_{x^{1}}+x^{2}\partial_{x^{2}}-2x^{3}\partial_{x^{3}}+\partial_{x^{4}},\\
X_{5}(x) & =x^{1}\partial_{x^{2}}-(x^{3})^{2}\partial_{x^{3}}+x^{3}\partial_{x^{4}}.
\end{align*}

The vector fields $\xi_{A}(x, e_{H})$, in turn, are determined by the expressions
\begin{align*}
\xi_{a}(x,h) & =X_{a}(x),\quad\xi_{5}(x,h)=X_{5}(x)+e^{-2x^{4}}\partial_{h^{1}}.
\end{align*}

The right-invariant vector fields in the canonical coordinates (\ref{can3}) are
\begin{gather*}
\eta_{1}(x,h)=-\left(e^{-x^{4}}+h_{1}x^{3}e^{x^{4}}\right)\partial_{x^{1}}-h_{1}e^{x^{4}}\partial_{x^{2}},\quad
\eta_{2}(x,h)=e^{-x^{4}}(x^{3}\partial_{x^{1}}+\partial_{x^{2}}),\\
\eta_{3}(x,h)=-e^{-2x^{4}}\partial_{x^{3}}-h_{1}\partial_{x^{4}}+h_{1}^{2}\partial_{h^{1}},\quad
\eta_{4}(x,h)=-\partial_{x^{4}}+2h_{1}\partial_{h^{1}},\\
\eta_{5}(x,e_{H})=-\partial_{h^{1}}.
\end{gather*}
The gamma matrices $\hat{\gamma}^{a}$ can be presented  in terms of the standard Dirac gamma matrices $\gamma^{a}$ as follows:
\begin{align*}
 & \hat{\gamma}^{1}=\gamma^{1}+\gamma^{2},\quad\hat{\gamma}^{2}=-\frac{c_{2}}{c_{3}}\left(\gamma^{1}+\gamma^{2}\right)+\sqrt{-c_{4}}\gamma^{3},\\
 & \hat{\gamma}^{3}=-\frac{c_{3}}{\sqrt{-c_{4}}}\left(\gamma^{3}+i\gamma^{4}\right),\\
 & \hat{\gamma}^{4}=-\frac{c_{3}}{2}\left(\gamma^{1}-\gamma^{2}\right)-\frac{2c_{2}}{\sqrt{-c_{4}}}\left(\gamma^{3}+i\gamma^{4}\right)-\frac{c_{1}}{c_{3}}\left(\gamma^{1}+\gamma^{2}\right).
\end{align*}
The spin connection is independent of local coordinates and has the form
\begin{align*}
 & \Gamma(x)=\Gamma-\hat{\gamma}^{4}\Lambda,\quad\Lambda=\frac{1}{2c_{3}}\hat{\gamma}^{1}\hat{\gamma}^{3},\\
 & \Gamma=\frac{c_{1}}{4c_{3}^{2}}\hat{\gamma}^{1}\hat{\gamma}^{1}\hat{\gamma}^{3}+\frac{c_{2}}{4c_{3}^{2}}\hat{\gamma}^{1}\hat{\gamma}^{3}\hat{\gamma}^{4}+\frac{1}{c_{3}}\hat{\gamma}^{2}\hat{\gamma}^{3}\hat{\gamma}^{4}-\frac{c_{1}}{4c_{3}}\hat{\gamma}^{1}+\frac{3c_{3}}{4c_{3}}\hat{\gamma}^{3}-2\hat{\gamma}^{4}.
\end{align*}
The Dirac operator in local coordinates is
\begin{align*}
 & \mathcal{D}_{M}=-i\hbar\bigg{(}\left[\hat{\gamma}^{1}+x^{3}e^{2x^{4}}\hat{\gamma}^{2}\right]\partial_{y^{1}}+e^{x^{4}}\hat{\gamma}^{2}\partial_{x^{2}} -e^{-2x^{4}}\hat{\gamma}^{3}\partial_{x^{3}}+\hat{\gamma}^{4}\partial_{x^{4}}\bigg{)}+\Gamma-\hat{\gamma}^{4}\Lambda.
\end{align*}
The first-order symmetry operators are defined by (\ref{killX0}):
\[
\widetilde{X}_{a}(x)=X_{a}(x),\quad\widetilde{X}_{5}(x)=X_{5}(x)+e^{-2x^{4}}\Lambda.
\]

The metric (\ref{ds2primer}) generally does not admit the Yano vector field and the Yano-Killing tensor field, so the Dirac equation does not admit spin symmetry operators. As a result, the Dirac equation has  only two commuting symmetry operators $\{\widetilde {X}_{1}(x), \widetilde{X}_{2}(x)\}$ of the first order. However, the Dirac equation admits a third-order symmetry operator
\begin{align}
K(\widetilde{X})=& \widetilde{X}_{5}(x)\cdot\widetilde{X}_{1}(x)\cdot\widetilde{X}_{1}(x)+\widetilde{X}_{1}(x)\cdot\widetilde{X}_{2}(x)\cdot\widetilde{X}_{4}(x)-\widetilde{X}_{2}(x)\cdot\widetilde{X}_{2}(x)\cdot\widetilde{X}_{4}(x),\label{kas11}
\end{align}
where $X\cdot Y = (XY + YX)/2$ is the symmetrized product of the operators $X$ and $Y$. As a consequence, the metric (\ref{ds2primer}) does not admit separation of variables for  the Dirac equation. Note that the Klein-Gordon equation also admits only three commuting symmetry operators $\{X_{1}(x), X_{2}(x), K(X)\}$. One of them, $K(X)$, is the third-order operator, and the Klein-Gordon equation also does not admit separation of variables.

We now carry out a noncommutative reduction of the Dirac equation.

First, we describe  orbits of the coadjoint representation of the Lie group $G$. The Lie algebra $\mathfrak{g}$ admits the Casimir function $K(f)=f_{5}f_{1}^{2} + f_{2}f_{4} f_{1} -f_{2}^{ 2} f_{3}$. Expansion (\ref{rg}) takes the form
\begin{align*}
 & M_{0}=\left\{ f\in\mathfrak{g}^{*}\mid\neg\left(f_{1}=f_{2}=0\right)\right\},\quad
 \mathcal{O}_{\omega}^{0}=\left\{ f\in M_{0}\mid K_{1}(f)=\omega_{1}^{0}\right\},\\
 & \Omega^{0}=\mathbb{R}^{1},\quad\mathrm{dim}\mathcal{O}_{\lambda}^{0}=4,\\
 & M_{1}=\left\{ f\in\mathfrak{g}^{*}\mid f_{1}=f_{2}=0,\neg\left(f=0\right)\right\} ,\quad
 \mathcal{O}_{\omega^{1}}^{1}=\left\{ f\in M_{0}\mid K_{1}^{(1)}(f)=\omega_{1}^{1}\right\} ,\\
 & M_{2}=\mathcal{O}^{2}=\left\{ f=0\right\} ,\\
 & \mathfrak{g}^{*}=M_{0}\cup M_{1}\cup M_{2},\quad
 M_{0}=\bigcup_{\omega^{0}\in\Omega^{0}}\mathcal{O}_{\omega}^{0},\quad M_{1}=\bigcup_{\omega^{1}\in\Omega^{1}}\mathcal{O}_{\omega^{1}}^{1},
\end{align*}
where $K_{1}^{(1)}(f) = f_{4}^{2} + 4f_{3} f_{5}$ is the Casimir function of  $s = 1$ type. Each non-degenerate orbit from the class $\mathcal{O}_{\omega}^{0}$ passes through the covector $\lambda(j) = (1,0,0,0, j)$ and $K_{1} (\lambda(j)) = j$, where $j\in\mathbb{R}$.

The covector $\lambda(j)$  admits a real polarization  $\mathfrak{p} = \left\{e_{1}, e_{2}, e_{5}\right\}$, and the $\lambda$-representation corresponding to the class of orbits $\mathcal{O}_{\lambda (j)}^{0}$ is given by
\begin{align}
 & \ell_{1}(q,\partial_{q},\lambda)=\frac{i}{\hbar}q_{1},\quad\ell_{2}(q,\partial_{q},\lambda)=-\frac{i}{\hbar}q_{2},\nonumber \\
 & \ell_{3}(q,\partial_{q},\lambda)=q_{1}\partial_{q^{2}},\quad
   \ell_{4}(q,\partial_{q},\lambda)=q_{1}\partial_{q^{1}}-q_{2}\partial_{q^{2}},\nonumber\\
 & \ell_{5}(q,\partial_{q},\lambda)=q_{2}\partial_{q^{1}}+\frac{i}{\hbar}\frac{j}{q_{1}^{2}},\nonumber \\
 & K_{1}(-i\hbar\ell)=j.\label{lp5}
\end{align}
The operators $-i\hbar\ell_{X} (q, \partial_{q}, \lambda)$ are symmetric with respect to the measure $ d\mu (q) = dq_{1} dq_{2}$, $Q = \mathbb{R }^{2}$. Now solving  equations (\ref{tD}), we  get
\begin{align*}
 {D}_{qq'}^{\lambda}(g^{-1}) =&\exp\left(\frac{i}{\hbar}\left[q_{2}x^{2}-q_{1}x^{1}+\frac{j}{q_{2}'}\left(\frac{e^{x^{4}}}{q_{1}}-\frac{1}{q_{1}'}\right)\right]\right)\times\\
 & \times\delta\left(-h_{1}q_{2}'+q_{1}e^{-x^{4}},q_{1}'\right)\delta\left(-e^{x^{4}}\left(q_{1}x^{3}-q_{2}\right),q_{2}'\right),
\end{align*}
where $\delta(q,q')$ is the generalized Dirac delta function. The completeness and orthogonality conditions (\ref{Dful})-(\ref{Dort}) are satisfied for the measure
\[
  d\mu(\text{\ensuremath{\lambda) = (2\pi\hbar)^{-3} dj}},\quad j \in\mathbb{R}^{1}.
\] 
Orbits from the class $\mathcal{O}_{\lambda (j)}^{0}$ satisfy the integral condition (\ref{conjO}).

The homogeneous space $M$ has zero index, $i_{M} = 0$, and does not have identities that have to be taken into account in the method of  noncommutative integration. So, the $\lambda$-representation (\ref{lp5}) corresponds to the homogeneous space $M$.

Integrating the equation 
\[
  \left(\ell_5 (q',\partial_{q'},\lambda) + \Lambda\right)\psi_{\sigma}(q') = 0,\quad\sigma=(q_{1},q_{2},j)
\]
yields
\begin{equation}
\psi_{\sigma}(q')=\text{\ensuremath{\exp\left(\frac{ij}{\hbar q'_{1}q'_{2}}-\frac{q'_{1}}{q'_{2}}\Lambda\right)}\ensuremath{\psi_{\sigma}(q'_{2}).}}\label{ww3}
\end{equation}

Substituting \eqref{ww3} into the Dirac equation in the $\lambda$-representation (\ref{Dl}), we obtain the ordinary differential equation for the spinor $\psi_{\sigma}$,
\begin{align*}
 & i\hbar\frac{d}{dq_{2}'}\psi_{\sigma}(q'_{2})=M(q_{2}')\psi_{\sigma}(q_{2}'),\\
 & M(q_{2}')=\frac{1}{c_{1}q_{2}'}\hat{\gamma}^{4}\left\{ i\hbar\Gamma+q_{2}'\hat{\gamma}^{2}+\frac{j}{q_{2}'^{2}}\hat{\gamma}^{3}-m\right\}.
\end{align*}
Then we obtain the solution as
\begin{equation}
\psi_{\sigma}(q'_{2})=C_{\sigma}\exp\left(-\frac{i}{\hbar}\int_{1}^{q_{2}'}M(u)du\right),\label{sol1}
\end{equation}
where $C_{\sigma}$ is the normalization factor.

The solution to the Dirac equation in local coordinates can be obtained  by substituting (\ref{ww3}) into (\ref{sub001}) and integrating over $q'$ for $h_{1}= 0$:
\begin{align}
&\psi_{\sigma}(x)  =\exp\left(\frac{\frac{ij}{\hbar q_{1}}-q_{1}e^{-2x^{4}}\Lambda}{q_{2}-q_{1}x^{3}}\right)\exp\left(\frac{i}{\hbar}\left[q_{2}x^{2}-q_{1}x^{1}\right]\right)\psi_{\sigma}\left(e^{x^{4}}\left(q_{2}-q_{1}x^{3}\right)\right).\label{sol2}
\end{align}
The equations  (\ref{sol1}) and (\ref{sol2}) provide  a complete and orthogonal set of solutions to the Dirac equation on the homogeneous space $M$ with the metric (\ref{ds2primer}).

\section{The Dirac equation in  $AdS_{3}$ space\label{sec:ads3}}

Consider a three-dimensional de Sitter space $M$ as  a homogeneous space with the  de Sitter group of transformations $G=SO(1,3)$ and  the isotropy subgroup $H$ being  the Lorentz group $SO(1,2)$.
The de Sitter space $M$ has constant positive curvature and $M$ is topologically isomorphic to $R^{1}\times S^{2}$.

The group $SO(1,3)$ is defined as a rotation group of a 4-dimensional pseudo-Euclidean space with the metric $(G_{AB}) =\mathrm{diag} (1, -1, -1, -1)$. The Lie algebra $\mathfrak{g} = \mathfrak{so}(1,3)$ of the group $SO(1,3)$ in the basis $\left\{E_{AB}\mid A <B \right \}$ is defined by the commutation relations
\begin{align*}
   [E_{AB},E_{CD}]  = G_{AD}E_{BC}-G_{AC}E_{BD} +
   G_{BC}E_{AD}-G_{BD}E_{AC},
\end{align*}
where $A,B,C,D = 1,\dots , 4$. The basis $E_{AB}$ can be  represented as
\begin{gather*}
    E_{ab}=e_{ab},\ (a<b),\quad E_{a4}=e_{a}/\varepsilon,\\
    [e_{a},e_{b}]=\varepsilon^{2}e_{ab},\quad a,b=1,\dots,3.
\end{gather*}
Here the basis $e_{ab}$ forms the isotropy subalgebra $\mathfrak{h} = \mathfrak{so}(1,2)$, and $\varepsilon$ is a parameter defining the curvature of the de Sitter space. We introduce the canonical coordinates of the second kind on the Lie group $G$ so that
\begin{equation}
   g(t,x,y,h_{1},h_{2},h_{3})=
   e^{h_{3}e_{23}}e^{h_{2}e_{13}}e^{h_{1}e_{12}}e^{ye_{3}}e^{xe_{2}}e^{te_{1}},\label{can_so13}
\end{equation}
The group generators in the canonical coordinates can be written as
\begin{align*}
X_{1}(x) & =\partial_{t},\\
X_{2}(x) & =\sinh(\varepsilon t)\tan(\varepsilon x)\partial_{t}+\cosh(\varepsilon t)\partial_{x},\\
X_{3}(x) & =\sinh(\varepsilon t)\sec(\varepsilon x)\tan(\varepsilon y)\partial_{t}+\\
         & +\cosh(\varepsilon t)\sin(\varepsilon x)\tan(\varepsilon y)\partial_{x}+\\
         & +\cosh(\varepsilon t)\cos(\varepsilon x)\partial_{y},
\end{align*}
\begin{align*}
X_{12}(x) & =\varepsilon^{-1}\left(\cosh(\varepsilon t)\tan(\varepsilon x)\partial_{t}+\sinh(\varepsilon t)\partial_{x}\right),\\
X_{13}(x) & =\varepsilon^{-1}\big{(}\cosh(\varepsilon t)\sec(\varepsilon x)\tan(\varepsilon y)\partial_{t}+\\
         &+\sinh(\varepsilon t)\sin(\varepsilon x)\tan(\varepsilon y)\partial_{x}+\\
         &+\sinh(\varepsilon t)\cos(\varepsilon x)\partial_{y}\big{)},\\
X_{23}(x) & =\varepsilon^{-1}\left(\cos(\varepsilon x)\tan(\varepsilon y)\partial_{x}-\sin(\varepsilon x)\partial_{y}\right).
\end{align*}

The vector fields $\xi(x, e_{H})$, in turn, are determined by
\begin{align*}
\xi_{1}(x,e_{H}) & =X_{1}(x),\\
\xi_{2}(x,e_{H}) & =X_{2}(x)+\epsilon\sinh(t\epsilon)\sec(x\epsilon)\partial_{h^{1}},\\
\xi_{3}(x,e_{H}) & =X_{3}(x)+\epsilon\sinh(t\epsilon)\tan(x\epsilon)\tan(y\epsilon)\partial_{h^{1}}\\
 & +\epsilon\sinh(t\epsilon)\sec(y\epsilon)\partial_{h^{2}}+\\
 & +\epsilon\cosh(t\epsilon)\sin(x\epsilon)\sec(y\epsilon)\partial_{h^{3}},
\end{align*}
\begin{align*}
\xi_{12}(x,e_{H}) & =X_{12}(x)+\cosh(t\epsilon)\sec(x\epsilon)\partial_{h^{1}},\\
\xi_{13}(x,e_{H}) & =X_{13}(x)+\cosh(t\epsilon)\tan(x\epsilon)\tan(y\epsilon)\partial_{h^{1}}\\
 & +\cosh(t\epsilon)\sec(y\epsilon)\partial_{h^{2}}+\\
 & +\sinh(t\epsilon)\sin(x\epsilon)\sec(y\epsilon)\partial_{h^{3}},\\
\xi_{14}(x,e_{H}) & =X_{14}(x)+\cos(x\epsilon)\sec(y\epsilon)\partial_{h^{3}}.
\end{align*}

The right-invariant vector fields in the canonical coordinates (\ref{can_so13}) are
\begin{align*}
\eta_{1}(x,e_{H}) = &-\sec(\varepsilon x)\sec(\varepsilon y)\partial_{x}-\\
   &-\varepsilon\tan(\varepsilon x)\sec(\varepsilon y)\partial_{h^{1}}-
   \varepsilon\tan(\varepsilon y)\partial_{h^{2}},\\
\eta_{2}(x,e_{H}) = &-\sec(\varepsilon y)\partial_{y}-\varepsilon\tan(\varepsilon y)\partial_{h^{3}},\\
\eta_{3}(x,e_{H}) = &-\partial_{z},\quad
\eta_{12}(x,e_{H})= -\partial_{h^{1}},\\
\eta_{13}(x,e_{H})= &-\partial_{h^{2}},\quad
\eta_{23}(x,e_{H})= -\partial_{h^{3}}.
\end{align*}
The $2$--form $(G_{ab}) = \mathrm{diag} (1, -1, -1)$ defines an invariant metric on the space $M$  in local coordinates as
\begin{align*}
& ds^{2} =  g_{ij}(x)dx^{i}dx^{j} =
\rho(x,y)^{2}\left\{ dt^{2}-\frac{dx^{2}}{\cos^{2}(\varepsilon x)}-\frac{dy^{2}}{\rho(x,y)^{2}}\right\} ,\\
&\rho(x,y)=  \cos(\varepsilon x)\cos(\varepsilon y).
\end{align*}
Whence the scalar curvature of the space $M$ reads $R=6\varepsilon^{2}$.

The gamma matrices $\hat{\gamma}^{a}$ in the $(2 + 1)$-dimensional space in terms of the Pauli matrices $\left\{\sigma_{x}, \sigma_{y}, \sigma_{z}\right\}$ are
\[
  \hat{\gamma}^{1}=\sigma_{z},\quad\hat{\gamma}^{2}=is\sigma_{x},\quad\hat{\gamma}^{3}=i\sigma_{y}.
\]
Here the parameter $s=\mathrm{Tr}\hat{\gamma}^{1}\hat{\gamma}^{2}\hat{\gamma}^{3} = \pm 1$ is called a pseudospin. The matrices
\[
\Lambda_{12}=-\frac{is}{2}\hat{\gamma}^{3},\quad\Lambda_{13}=\frac{is}{2}\hat{\gamma}^{2},\quad\Lambda_{23}=-\frac{is}{2}\hat{\gamma}^{1}
\]
realize a spinor representation of the isotropy subalgebra in a space of two-dimensional spinors. In our case $\Gamma_{a} = 0$ and the spin connection takes the simple form
\begin{align*}
  \Gamma(x) = &\hat{\gamma}^{a}\eta_{a}^{\alpha}(x,e_{H})\Lambda_{\alpha}=
            \frac{\varepsilon}{2}\hat{\gamma}^{2}\tan(\varepsilon x)\sec(\varepsilon y)+\varepsilon\hat{\gamma}^{3}\tan(\varepsilon y).
\end{align*}
Then the Dirac operator in local coordinates reads
\begin{align*}
\mathcal{D}_{M} = -i\sec(\varepsilon x)\sec(\varepsilon y)\hat{\gamma}^{1}\partial_{t} -i\sec(\varepsilon y)\hat{\gamma}^{2}\left(\partial_{x}-\frac{\varepsilon}{2}\tan(\varepsilon x)\right) -i\hat{\gamma}^{3}\left(\partial_{y}-\varepsilon\tan(\varepsilon y)\right).
\end{align*}
The first-order symmetry operators, as defined by (\ref{killX0}),  are given by
\begin{align*}
\widetilde{X}_{1}(x) & =X_{1}(x),\\
\widetilde{X}_{2}(x) & =X_{2}(x)-\frac{is\varepsilon}{2}\sinh(\varepsilon t)\sec(\varepsilon x)\hat{\gamma}^{3},\\
\widetilde{X}_{3}(x)& =X_{3}(x)-\frac{is\varepsilon}{2}\left(\cosh(\varepsilon t)\sin(\varepsilon x)\sec(\varepsilon y)\hat{\gamma}^{1}-\right.\\
 & -\left.\sinh(\varepsilon t)\sec(\varepsilon y)\hat{\gamma}^{2}-\right.\\
 &  -\left.\sinh(\varepsilon t)\tan(\varepsilon x)\tan(\varepsilon y)\hat{\gamma}^{3}\right),\\
\widetilde{X}_{12}(x) & =X_{12}(x)-\frac{is\varepsilon}{2}\cosh(\varepsilon t)\sec(\varepsilon x)\hat{\gamma}^{3},\\
\widetilde{X}_{13}(x) & =X_{13}(x)-\frac{is}{2}\left(\sinh(\varepsilon t)\sin(\varepsilon x)\sec(\varepsilon y)\hat{\gamma}^{1}-\right.\\
 & -\left.\cosh(\varepsilon t)\sec(\varepsilon y)\hat{\gamma}^{2}+\right.\\
  & +\left.\cosh(\varepsilon t)\tan(\varepsilon x)\tan(\varepsilon y)\hat{\gamma}^{3}\right),\\
\widetilde{X}_{14}(x) & =X_{14}(x)-\frac{is\varepsilon}{2}\cos(\varepsilon x)\sec(\varepsilon y)\hat{\gamma}^{1}.
\end{align*}
Our aim is to construct  a complete set of exact solutions of the Dirac equation corresponding to the operator $\mathcal {D}_{M}$ using the noncommutative integration method.

The Lie algebra $\mathfrak{g}$ admits  the following two Casimir functions:
\[
K_{1}(f)=p^{a}p_{a}-\frac{\varepsilon^{2}}{2}l^{ab}l_{ab},\quad K_{2}(f)=\frac{\varepsilon}{8}e^{ABCD}L_{AB}L_{CD},
\]
where $e^{ABCD}$ is the Levi-Civita symbol ($e^{1234} =1$); $(p_{a}, l_{ab})$ are coordinates of the covector with respect to the basis $\left\{e^{a}, e^{ab} \right\}$, i.e.  $f = p_{a} e^{ a} + l_{ab} e^{ab}$, and $L_{AB}$ are  the  coordinates with respect to the basis $\left\{E^{AB}\right\}$,   $f = L_{AB} E^{AB}$. The raising and lowering indices are performed using the matrix $G_{AB}$. The expansion (\ref{rg}) in our case takes  the form:
\begin{align*}
 & M_{0}=\left\{ f\in\mathfrak{g}^{*}\mid\neg\left(f=0\right)\right\},\\
 & \mathcal{O}_{\omega}^{0}=\left\{ f\in M_{0}\mid K_{1}(f)=\omega_{1}^{0},\quad K_{2}(f)=\omega_{2}^{0}\right\},\\
 & \Omega^{0}=\mathbb{R}^{2},\quad\mathrm{dim}\mathcal{O}_{\lambda}^{0}=4,\\
 & M_{1}=\left\{ \emptyset\right\} ,\quad\mathcal{O}^{1}=\left\{ \emptyset\right\} ,\quad M_{2}=\mathcal{O}^{2}=\left\{ f=0\right\} ,\\
 & \mathfrak{g}^{*}=M_{0}\cup M_{2},\quad M_{0}=\bigcup_{\omega^{0}\in\Omega^{0}}\mathcal{O}_{\omega}^{0}.
\end{align*}
Each non-degenerate orbit from the class $\mathcal{O}_{\omega}^{0}$ passes through the covector $\lambda(j)=(j_{1}, 0,0,0,0, j_{2})$  characterized by two real parameters $j = (j_{1}, j_{2})\in \mathbb{R}^{2}$,  and
\begin{align*}
& K_{1}(\lambda(j)) =\omega_{1}^{0}(j)=j_{1}^{2}-\varepsilon^{2}j_{2}^{2},\quad K_{2}(\lambda(j)) =\omega_{2}^{0}(j)=j_{1}j_{2},\\
& \mathrm{det}\left\Vert \frac{\partial\omega_{\mu}^{0}(j)}{\partial j_{\nu}}\right\Vert =2\left(j_{1}^{2}+\varepsilon^{2}j_{2}^{2}\right).
\end{align*}
If  $K_{2}(f)=0$ for $f\in\mathfrak{p}^{\perp} = \text{\ensuremath{\left\{f_{a}e^{a}\mid f\in\mathfrak{g}^{*}\right \}}}$, then the Casimir operator $K_{2}(X)$ is an identity on the homogeneous space $M$:
\[
  K_{2}(X)=X_{3}\cdot X_{12}-X_{2}\cdot X_{13}+X_{1}\cdot X_{23}=0.
\]
Note that $K_{2}(\widetilde{X})$ is proportional to the Dirac operator:
\begin{align*}
	 K_{2}(\widetilde{X}) &  = \widetilde{X}_{3}\cdot\widetilde{X}_{12}-\widetilde{X}_{2}\cdot\widetilde{X}_{13}+\widetilde{X}_{1}\cdot\widetilde{X}_{23}=\frac{s}{2}\mathcal{D}_{M}.
\end{align*}
The covector $\lambda(j)$ admits the real polarization
\[
  \mathfrak{p} = \left\{e_{1}, e_{2} + \varepsilon e_{12}, e_ {3} + \varepsilon e_{13} , e_{23} \right \}.
\]
The corresponding $\lambda$-representation for the class of orbits
$\mathcal{O}_{\lambda (j)}^{0}$ is represented in Appendix A (see Eqs. (\ref{s013l})). The Casimir operators in the $\lambda$-representation are
\begin{align*}
   & K_{1}(-i\hbar\ell) = \kappa_{2}(\lambda) = j_{1}^{2}-\varepsilon^{2}j_{2}^{2}+(\varepsilon\hbar)^{2},\\ 
   & K_{2}(-i\hbar\ell)=\kappa_{2}(\lambda)=j_{1}j_{2}.
\end{align*}
The equation $\left(\ell_{\alpha}(q', \partial_{q'}, \lambda) + \Lambda_{\alpha}\right) c_{\lambda}(q ') = 0$ provided that  $j_{2} = s/2$ has a nonzero solution
\begin{align*}
c_{\lambda}(q) & = \frac{(\cos(\varepsilon q_{1})\cos(\varepsilon q_{2}))^{-\frac{3}{2}-\frac{ij_{1}}{\hbar\varepsilon}}}{\cos(\varepsilon q_{1})\cos(\varepsilon q_{2})+1}\times\\
&\times\bigg{(}\cos\left(\frac{\varepsilon(q_{1}+q_{2})}{2}\right)+is\cos\left(\frac{\varepsilon(q_{1}-q_{2})}{2}\right)\bigg{)}\begin{pmatrix}\cos(\varepsilon q_{1})\cos(\varepsilon q_{2})+1\\
is\sin(\varepsilon q_{1})-\cos(\varepsilon q_{1})\sin(\varepsilon q_{2})
\end{pmatrix}.
\end{align*}

The Dirac equation in the $\lambda$-representation,
\[
\left\{ i\hbar\hat{\gamma}^{a}\ell_{a}(q',\partial_{q'},\lambda)-m\right\} c_{\lambda}(q')=0,
\]
is reduced to the algebraic equation $j_{1} + m = 0$, then we have  $j_{1} =-m$ and $j_{2} = s/2$. That is, the eigenvalue of the Casimir operator $K_{1} (i\hbar X)$ is determined by  the particle mass $m$, and the eigenvalue of the second Casimir operator, $K_{2}(i \hbar X)$, depends on the parameter $s$:
\[
\kappa_{1}(\lambda)=m^{2}-\frac{1}{4}\varepsilon^{2}+(\varepsilon\hbar)^{2},\quad\kappa_{2}(\lambda)=-\frac{1}{2}ms.
\]
The solution of the original Dirac equation in our case reads
\begin{gather}
\psi_{\sigma}(x)=e^{-t\ell_{1}(q,\partial_{q},\lambda)}e^{-x\ell_{2}(q,\partial_{q},\lambda)}e^{-y\ell_{3}(q,\partial_{q}.\lambda)}c_{\lambda}(q),\quad \sigma=(q_{1},q_{2})\label{solads3}
\end{gather}
Here,  the exponentials of  operators of the $\lambda$-representa\-tion for the fixed $j_{1} = - m$ and $j_{2} = s/2$  act on a function according to Appendix A, Eqs. (\ref{expl}). From here one can  see that the solution  \eqref{solads3} depends on  two quantum numbers $q_{1}$ and $q_{2}$, which are not eigenvalues  for symmetry operators. The explicit form of the solution (\ref{solads3}) is cumbersome, but it is expressed in terms of elementary functions.

\section{Conclusion} \label{sec:Conclusion-remarks}

In this paper, we have explored the Dirac equation with an invariant metric on a homogeneous space $M$ of arbitrary dimension and developed the noncommutative integration method for this equation based on the ideas of symmetry analysis and the Lie group theory.

The Dirac equation and its symmetry are convenient to study in terms of algebraic structures associated with homogeneous spaces, and the theory of Lie group representations can be effectively applied for constructing exact solutions.

Using a special choice of the local frame and right-invariant vector fields on the Lie group  of  transformations $G$, we have obtained the spin connection (\ref{sp_gamma_x}). The Dirac equation is shown to be equivalent to a system of linear differential equations with constant matrix coefficients on the Lie group $G$ given by (\ref{sysG}) that is the starting point for noncommutative integration.

In Refs. \cite{fed01,sh02,var03,var04,kls01,kls02} an early version of the NCIM  was used to construct exact solutions to the Dirac equation in four-dimensional space-time where the $\lambda$-representa\-tion was constructed directly by definition \eqref{defL0}, and the domain of the variables $q$ was not associated  with a Lagrangian submanifold to the K-orbit. The desired solutions were constructed by means of joint integration of the system (\ref{XLp}) in local coordinates together with the original Dirac equation \eqref{diracM}.

Differently to the above early method, the main idea here is the noncommutative reduction of the corresponding system of equations on the Lie group $G$ and the connection between the solutions of this system and the original Dirac equation.

The noncommutative reduction is defined here using a special irreducible $\lambda$-representation of the Lie algebra $\mathfrak{g}$ of the Lie group $G$, which we introduce using the orbit method \cite{Kirr}. The key point of the method developed is based on the fact that there exist the identities connecting symmetry operators on a homogeneous space. For the Dirac equation, as follows from the lemma \ref{Flemma}, the number of identities is either less than for the Klein--Gordon equation or they are completely absent. For the Klein--Gordon equation, the number of identities is determined by the index of the homogeneous space \cite {ShBr11}.

The problem of describing identities for the symmetry operators of the Dirac equation on the homogeneous space is constructively solved for the first time. The reduced system (\ref{Dl}) in the $\lambda$-representation depending on a smaller number of independent variables $q'$ is obtained. What is remarkable is the fact that the solutions obtained for the Dirac equation on a homogeneous space are closely related to the two principal bundles of the transformation group. The local coordinates $x$ on the space $M$ are determined by means of the principal $H$-bundle of the group $G$ with the isotropy subgroup $H$ of $M$, and the set of quantum numbers $q$ is connected with the principal $P$-bundle of the same group $G$ and the subgroup $P$.

The NCIM developed in the paper for the Dirac equation is illustrated by two non-trivial examples. In one of them, described in Section \ref{sec:nosh},  we have found by using the NCIM  a complete set of solutions to the Dirac equation (the MCIM-solutions) in the case when the metric does not admit separation of variables neither in the Klein-Gordon equation nor in the Dirac equation. The solutions obtained are eigenfunctions of the symmetry operator of the third-order  (\ref{kas11}) and are parameterized by three parameters $(q_{1}, q_{2}, j)$.

The second example is the three-dimensional de Sitter space $\mathrm{AdS_{3}}$ with the transformation group $SO(1,3)$ (Section \ref{sec:ads3}). In this case, the Casimir operator $K_{2}(- i\hbar \widetilde{X})$ is proportional to the Dirac operator: 
\[
  \mathcal{D}_{M}=2sK (-i \hbar \widetilde {X}),
\] 
and the spectrum of the Casimir operator $ -K_{1} (- i \hbar \widetilde {X})$ gives the mass $m$ of the spinor field.

Noteworthy, the function $K_{2}(f)$ is an identity in the homogeneous space $M$, but when substituting the extended operators $-i\hbar\widetilde{X}$, it is no longer an identity, and this leads to the original Dirac operator. If we consider the Klein--Gordon equation in $\mathrm{AdS_{3}}$, the operator $-K_{1}(- i\hbar\widetilde{X})$ is proportional to the operator of the equation, and the second operator $K_{2}(- i\hbar X)$ is identically zero that  corresponds to the case $s=0$. So the noncommutative integration of the Dirac equation is different from the noncommutative integration of the Klein-Gordon equation.
The complete set of exact solutions (\ref{solads3}) of the Dirac equation found by the NCIM is parameterized by two real parameters $(q_{1}, q_{2})$ and is expressed by means of elementary functions, while the separation of variables gives the basis solutions to the Dirac equation in terms of special functions.

The parameters $q$ of solutions (\ref{sol2}) and (\ref{solads3}) obtained by the NCIM are in general not eigenvalues of an operator, a fact that crucially distinguishes them from solutions obtained by  separation of variables. Nevertheless, the NCIM-solutions can be effectively applied in order to study quantum effects in homogeneous spaces (see, e.g., \cite{ShBr11,ShBr14}).

The NCIM-solutions of the Dirac equation  may have a wide range of applications in the theory of fermion fields \cite{Hck,Toms}, quantum cosmology \cite{no2,BrM} and other problems of  field theory.  The NCIM  can be  applied also  to the Dirac-type  equation  for  theoretical models in  the condensed matter (graphene, topological insulators, etc.) \cite{vaf,klm}. Note that the technique proposed in the article can be easily generalized to the case of spaces having  new spatial dimensions much larger than the weak scale, as large as a millimeter for the case of two extra dimensions \cite{tev}.

Finally, we note that the NCIM reveals new aspects, both related to the symmetry of the Dirac equation and its integrability, and to study the properties of new solutions constructed. One of the problems is to
 find out the meaning of the parameters $q$ entering into the NCIM-solutions which, in the general case, do not have to be eigenvalues of operators representing observables. One can notice some similarity of the NCIM-solutions  with   well-studied coherent states \cite{per}. In particular, the action of the group on the set of $Q$ data of quantum numbers is defined, that can be found in the theory of coherent states \cite{mm, per, per2, Bagrov2}. However, the analysis of the parameters is the subject of special research.
 
 \section*{Acknowledgements}
 
 Breev and Shapovalov were partially supported by Tomsk State University under the International Competitiveness Improvement Program; Breev was partially supported by the Russian Foundation for Basic Research (RFBR) under the project No. 18-02- 00149; Shapovalov was partially supported by Tomsk Polytechnic University under the International Competitiveness Improvement Program and by RFBR and Tomsk region according to the research project No. 19-41-700004.

\section*{Appendix A. $\lambda$-representation of Lie algebra $\mathfrak{so}(1,3)$}

The $\lambda$ -representation for the class of orbits $\mathcal {O}_{\lambda (j)}^{0}$
is written as
\begin{align*}
\ell_{1}(q,\partial_{q},\lambda)  =&\sin(\varepsilon q_{1})\cos(\varepsilon q_{2})\partial_{q^{1}}+ 
\sec(\varepsilon q_{1})\sin(\varepsilon q_{2})\partial_{q^{2}}+\\ 
+&\left(\varepsilon+\frac{i}{\hbar}j_{1}\right)\cos(\varepsilon q_{1})\cos(\varepsilon q_{2})
+\frac{i}{\hbar}\varepsilon j_{2}\tan(\varepsilon q_{1})\sin(\varepsilon q_{2}),\\ 
\ell_{2}(q,\partial_{q},\lambda)  =&\cos(\varepsilon q_{2})\partial_{q^{1}}+\tan(\varepsilon q_{1})\sin(\varepsilon q_{2})\partial_{q^{2}} 
-\frac{i}{\hbar}\varepsilon j_{2}\sec(\varepsilon q_{1})\sin(\varepsilon q_{2}),\\ 
\ell_{3}(q,\partial_{q},\lambda)  =&\partial_{q^{2}},\\ 
\ell_{12}(q,\partial_{q},\lambda)  =&\varepsilon^{-1}\left[-\cos(\varepsilon q_{1})\partial_{q^{1}}+\left(\varepsilon+\frac{i}{\hbar}j_{1}\right)\sin(\varepsilon q_{1})\right],
\end{align*}
\begin{align*}
\ell_{13}(q,\partial_{q},\lambda) = &\varepsilon^{-1}\big{[}
\sin(\varepsilon q_{1})\sin(\varepsilon q_{2})\partial_{q^{1}} 
-\sec(\varepsilon q_{1})\cos(\varepsilon q_{2})\partial_{q^{2}}+\\ 
+&\left(\varepsilon+\frac{i}{\hbar}j_{1}\right)\cos(\varepsilon q_{1})
  \sin(\varepsilon q_{2})
+\frac{i}{\hbar}\varepsilon j_{2}\tan(\varepsilon q_{1})\cos(\varepsilon q_{2})\big{]},
\end{align*}
\begin{align}
	\ell_{23}(q,\partial_{q},\lambda)  =&\varepsilon^{-1}\left[\sin(\varepsilon q_{2})\partial_{q^{1}}-\tan(\varepsilon q_{1})\cos(\varepsilon q_{2})\partial_{q^{2}}
 +\frac{i}{\hbar}\varepsilon j_{2}\sec(\varepsilon q_{1})\cos(\varepsilon q_{2})\right]. \label{s013l}
\end{align}
The exponentials of operators $\ell_{1}(q,\partial_{q},\lambda)$, $\ell_{2}(q,\partial_{q},\lambda)$ and $\ell_{3}(q,\partial_{q},\lambda)$ for the fixed 
$j_{1} = - m$ and $j_{2} = s/2$ act on a function as follows:
\[
e^{-y\ell_{3}(q,\partial_{q},\lambda)}f(q)=f(q_{1},q_{2}-y),
\]
\begin{align*}
 & e^{-x\ell_{2}(q,\partial_{q},\lambda)}f(q)=e^{\frac{is}{2}\mathrm{\mathrm{arccot}} \left(\frac{\cot\varepsilon q_{2}}{\sin\varepsilon q_{1}}\right)}\Phi_{2}\left[\cos\varepsilon q_{1}\sin\varepsilon q_{2},\arctan\left(\frac{\tan\varepsilon q_{1}}{\cos\varepsilon q_{2}}\right)-\varepsilon x\right],\\
 & \Phi_{2}(a,b)=e^{-\frac{1}{2}is\,\arctan(a\tan b)}f\bigg{[}\varepsilon^{-1}\mathrm{arctan}
 \left(\frac{\sqrt{1-a^{2}}\tan b}{\sqrt{a^{2}\tan^{2}b+1}}\right),
 \varepsilon^{-1}\mathrm{arcsec}\left(\sqrt{\frac{a^{2}\tan^{2}b+1}{1-a^{2}}}\right)\bigg{]},
\end{align*}
\begin{align}\nonumber
 & e^{-t\ell_{1}(q,\partial_{q},\lambda)}f(q)=\frac{(\sin\varepsilon q_{1})^{im/\varepsilon}}{\sqrt{\sin\varepsilon q_{1}\left(\sin\varepsilon q_{1}\cos\varepsilon q_{2}-is\sin\varepsilon q_{2}\right)}}\times\\ \nonumber
 & \times\Phi_{3}\left[\sin\varepsilon q_{2}\cot\varepsilon q_{1},\frac{\cos\varepsilon q_{1}\cos\varepsilon q_{2}\sinh\varepsilon t+\cosh\varepsilon t}{\sqrt{1-\left(\cos\varepsilon q_{1}\cos\varepsilon q_{2}\right)^{2}}}\right],\\ \nonumber
 & \Phi_{3}(a,b) =\left(b\sqrt{a^{2}+1}\right)^{im/\varepsilon}\left(\frac{\sqrt{b^{2}-1}-is\,ab}{b^{2}\sqrt{a^{2}+1}\sqrt{\left(a^{2}+1\right)b^{2}-1}}\right)^{1/2}\times\\ 
  &f\bigg{[}\varepsilon^{-1}\mathrm{arcsec}\left(\frac{\sqrt{a^{2}+1}b}{\sqrt{\left(a^{2}+1\right)b^{2}-1}}\right),\varepsilon^{-1}\mathrm{arccos}\left(\frac{\sqrt{a^{2}+1}\sqrt{b^{2}-1}}{\sqrt{\left(a^{2}+1\right)b^{2}-1}}\right)\bigg{]}.\label{expl}
\end{align}

\end{document}